\newlength\titlebox \setlength\titlebox{2.375in}
\newtheorem{definition}{Definition}
\newtheorem{theorem}{Theorem}
\newtheorem{lemma}{Lemma}
\newtheorem{remark}{Remark}
\newtheorem{corollary}{Corollary}
\begin{document}
\title{{Active Community Detection with Maximal Expected Model Change}}
\author{ Dan Kushnir\thanks{Bell Laboratories, Nokia, Murray Hill, NJ 07974, USA. Email:dan.kushnir@nokia-bell-labs.com} \and Benjamin Mirabelli\thanks{Applied and Computational Mathematics
Princeton University
Princeton, NJ 08544, USA.
Email:benno.mirabelli@gmail.com}}
\date{}
\maketitle
\begin{abstract}
  We present a novel active learning algorithm for community detection on networks. Our proposed algorithm uses a Maximal Expected Model Change (MEMC) criterion for querying network nodes label assignments. MEMC detects nodes that maximally change the community assignment likelihood model following a query. Our method is inspired by detection in the benchmark Stochastic Block Model (SBM), where we provide  sample complexity analysis and empirical study with SBM and real network data for binary as well as for the multi-class settings. The analysis also covers the most challenging case of sparse degree and below-detection-threshold SBMs, where we observe a super-linear error reduction. MEMC is shown to be superior to the random selection baseline and other state-of-the-art active learners.
\end{abstract}

\section{Introduction}

Community detection, or clustering on networks, is a fundamental problem in a broad range of disciplines, from the study of biological and social networks to the classification of non-graphical data sets via the construction of pairwise similarity graphs \cite{comdetectBook}.  However, perfectly recovering each element's community by only observing the given graphical data is shown to be statistically impossible for many networks of interest.  Therefore, a growing area of research has been focused on the development of semi-supervised community detection algorithms \cite{Allahverdyan,Eaton,mossel_partial,Zhang}. More recently active learning \cite{Settles} has been introduced to the task of community detection (e.g. \cite{Cheng,Gadde,Moore}). Active learning, in this setting, allows to use a minimal and intelligently selected set of nodes to be labelled in order to improve community detection. Active learning is especially beneficial when labeling information and training are hard or expensive to obtain.

In this paper we focus on using active learning to solve both the problem of `detection' and 'recovery' of communities: where 'detection' is related to finding community assignments that have a non-trivial correlation with the true assignments, and recovery addresses the assignments to be correct with vanishing error probability. In the context of detection \cite{Decelle,Mossel_3} have shown that for benchmark random networks generated by the stochastic block model (SBM) \cite{SBM}, there exists a fundamental `detection threshold' for the unsupervised setting. These results prove that when the signal to noise ratio (SNR) of SBM-generated graphical data is below the `detection' threshold (SNR$<1$) then it is statistically impossible for any strictly graph-based algorithm to predict community assignments with better accuracy than random chance \cite{Mossel_3,abbe_rev}.
Recent work on clustering SBM networks with random label information \cite{saad,Xu,mossel_partial} (aka semi-supervised) has shown detection is possible below the threshold. We note \cite{mossel_partial} for the challenging case of sparse degree networks, where equivalence between label propagation in broadcast-trees and SBM is drawn, also covering the SNR$<1$ case.

Active learning in the context of community detection is a new and pioneering area, in particular in the context of SBMs. So far, only a handful of active learning strategies have been introduced to the field of community-detection. In the context of SBM we note entropy-based selection criterion of \cite{Moore}, which computes the difference between a node's uncertainty and its correlation with other graph nodes in order to select the best node to query. The computationally demanding Gibbs sampling is used to derive the conditional distribution of node labels in SBM. In \cite{Gadde} the authors address the perfect recovery problem in the SBM's logarithmic degree regime, which uses sampling of nodes whose neighbor's labels suggest the highest disagreement on their current assignment. The sampling is done once from the initial label approximation, and there is no update of the labeling after each query\textbackslash sample. Thus, hypothesis updates, which prevent redundant queries, are not exploited. Other network\textbackslash graph based active learning methods in other contexts than SBM involve criteria such as 'representative' sampling \cite{Leng}, uncertainty sampling \cite{Kushnir}, and a combination of thereof \cite{Cheng, Yang}. \cite{Cheng} targets edge queries, however, node queries typically provide more information.

In this paper we construct a novel active learner that embeds a semi-supervised community-detection algorithm based on optimizing the likelihood function. Specifically in the case of SBM, we find the maximum-likelihood (ML) community labeling given a SBM-generated graph and a subset of already known community labels. In this sense our work takes the SDP solution of \cite{Goemans} for the unsupervised ML problem one step further into the semi-supervised domain, by using the labeling information as constraints. Our active learner utilizes the ML statistical framework in order to query nodes that Maximize the Expected Model Change (MEMC) of the likelihood model given the graph, labels, and the hypothesized query information. The MEMC criterion is comprised of optimizing two components simultaneously: the overall change in the global likelihood model for each query outcome, and the likelihood of the query label given the current model. Model change has been used for active learning in other frameworks (e.g. \cite{EMC1,Settles,EMC2}) but is novel for ML and community detection.

MEMC's sample complexity analysis marks its advantage over the baseline Random selection criterion in both below and above SNR-threshold networks and for different average node degree regimes. In particular, the analysis proves MEMC's preference to correct erroneous assignments, and bounds its query search range, while the Random sampling range is unbounded. The theoretical analysis is validated with an empirical study of the sample complexity. In the below SNR regime, our analysis and simulations reveal a surprisingly fast super-linear error reduction phase comprised of multiple nodes label corrections following a query. The super-linear phase is followed by a linear reduction phase that lasts until full recovery and where MEMC has a clear advantage over Random. We conclude our work with numerical experiments with SBM networks which validated our analysis, as well as experiments with real social networks showing a clear advantage of MEMC over the baseline and other state-of-the-art active learning.


\section{SBM and the Likelihood Function}\label{Section:SBM}
We start with definitions related to the SBM and derive new results pertaining to its likelihood.
\begin{definition}\label{def:SBM}
The \textit{Stochastic Block Model} (SBM) with parameters $n$, $r$, $a$ and $b$ defines a random graph ensemble on $n$ nodes. Each node is assigned a label, uniformly at random, from a discrete set of $r$ labels.  If node $i$ and node $j$ are assigned the same label there exists an edge between them with independent probability $p=\frac{a}{n}$ and if node $i$ and node $j$ are assigned different labels there exists an edge between them with independent probability $q=\frac{b}{n}$.
\end{definition}
When the number of nodes in each community is equal we refer to the model as the \textit{Symmetric SBM} (SSBM).
\begin{definition}\label{def:M}
(Modified Adjacency Matrix) Given a graph $G$ on $n$ vertices and parameters $p$ and $q$, let $M=M(G,p,q)$ be the $n\times n$ `modified' adjacency matrix where $M_{ij}=\log{\frac{p}{q}}$ if $G$ has an edge between node $i$ and node $j$ and $M_{ij}=\log\Big(\frac{1-p}{1-q}\Big)$ if $G$ does not have an edge between node $i$ and node $j$.  Let $SBM(n,r,p,q)$ represent the probability distribution over matrices $M$.
\end{definition}
For completion we provide the fundamental definition of the detection Signal-to-Noise ratio for SBMs \cite{Decelle} and the seminal results on community detection of \cite{Massoulie,Mossel_1} for $r=2$:
\begin{definition}\label{def:SNR}
The Signal-to-Noise-Ratio (SNR) of $SBM(n,r,\frac{a}{n},\frac{b}{n})$ is defined as
\begin{equation}\label{eq:SNR}
  SNR\doteq \frac{(a-b)^2}{r(a+(r-1)b}.
\end{equation}
\end{definition}

\begin{theorem}\label{conj:detect_thresh}
(Detection Threshold) \cite{Massoulie,Mossel_1}: for r=2, if $(a-b)^2> 2(a+b)$ then one can a.a.s. find a bisection which is positively correlated with $SBM(n,2,\frac{a}{n},\frac{b}{n})$.
\end{theorem}
\begin{definition}
(Vector Labels). Let $\bigtriangleup_r$ be a set of $r$ unit-vectors representing the vertices of an $(r-1)$-simplex and let $\bigtriangleup_r^n$ be the set of all $n$-tuples of such vectors.  Thus, if ${v_i \brack v_j}\in \bigtriangleup_r^2$, then the inner-product
\[
\langle v_i,v_j\rangle=
\begin{cases}
-\frac{1}{r-1} &\text{if }v_i\neq v_j\\
\;\;\;\;\;1 &\text{if }v_i= v_j.
\end{cases}
\]
\end{definition}

From now on we associate any discrete labeling over $r$ communities with vectors in the set $\bigtriangleup_r$. Thus, we refer to a complete labeling over $n$ nodes as $X\in\bigtriangleup_r^n$, where $X_i\in \bigtriangleup_r$ is the $i^{th}$ node's label and corresponds to the $i^{th}$ row of $X$. In Theorem \ref{Theorem:Main} we derive the un-normalized probability of any discrete community labeling assignment $x\in\bigtriangleup_r^n$ as a monotonic and convex function of the quadratic-form $Tr(x^TMx)$. 

\begin{theorem}\label{Theorem:Main}
Let $X\in \bigtriangleup_r^n$ be a discrete labeling assignment and let $\textbf{M}\sim SBM(n,r,p,q)$.  Then
\begin{equation}\label{Probability:DiscreteUnsupservised}
\mathbb{P}[\textbf{X}=X|\textbf{M}=M] \propto e^{\frac{r-1}{2r}Tr(X^TMX)}.
\end{equation}
\end{theorem}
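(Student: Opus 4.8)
The plan is to proceed by Bayes' rule, reduce the posterior to the data likelihood, and then convert the log-likelihood of the graph into the quadratic form $Tr(X^TMX)$ by exploiting the inner-product structure of $\bigtriangleup_r$. First I would write
\begin{equation}
\mathbb{P}[\textbf{X}=X|\textbf{M}=M]=\frac{\mathbb{P}[\textbf{M}=M|\textbf{X}=X]\,\mathbb{P}[\textbf{X}=X]}{\mathbb{P}[\textbf{M}=M]}.
\end{equation}
By Definition \ref{def:SBM} the prior $\mathbb{P}[\textbf{X}=X]=r^{-n}$ is uniform over all labelings, and the evidence $\mathbb{P}[\textbf{M}=M]$ does not depend on $X$; hence the posterior is proportional to the likelihood $\mathbb{P}[\textbf{M}=M|\textbf{X}=X]$, and it suffices to show this likelihood is proportional to $e^{\frac{r-1}{2r}Tr(X^TMX)}$.

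Next, since conditioning on $\textbf{M}=M$ is equivalent to conditioning on the underlying graph $G$ (the correspondence between $M$ and $G$ is a bijection by Definition \ref{def:M}) and since each potential edge is drawn independently, I would factor the likelihood over unordered pairs and take logarithms. Writing $E_{\mathrm{in}},N_{\mathrm{in}}$ for the numbers of edges and non-edges joining same-label nodes and $E_{\mathrm{out}},N_{\mathrm{out}}$ for the cross-label counts, the log-likelihood is $E_{\mathrm{in}}\log p+N_{\mathrm{in}}\log(1-p)+E_{\mathrm{out}}\log q+N_{\mathrm{out}}\log(1-q)$. Using that the total edge count $E=E_{\mathrm{in}}+E_{\mathrm{out}}$ and total non-edge count $N=N_{\mathrm{in}}+N_{\mathrm{out}}$ are fixed by $M$ (independent of $X$), I would eliminate the cross terms to obtain
\begin{equation}
\log\mathbb{P}[\textbf{M}=M|\textbf{X}=X]=E_{\mathrm{in}}\log\tfrac{p}{q}+N_{\mathrm{in}}\log\tfrac{1-p}{1-q}+C_1,
\end{equation}
with $C_1=E\log q+N\log(1-q)$ independent of $X$. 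The key observation is that the first two terms are exactly $\sum_{i<j,\,X_i=X_j}M_{ij}$, the sum of the modified-adjacency entries over same-label pairs.

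The final step converts this same-label sum into the full quadratic form. From the definition of $\bigtriangleup_r$ the inner product satisfies $\langle X_i,X_j\rangle=\frac{r}{r-1}\mathbf{1}[X_i=X_j]-\frac{1}{r-1}$, so that
\begin{equation}
\sum_{i<j}M_{ij}\langle X_i,X_j\rangle=\frac{r}{r-1}\!\!\sum_{i<j,\,X_i=X_j}\!\!M_{ij}-\frac{1}{r-1}\sum_{i<j}M_{ij}.
\end{equation}
Because $\sum_{i<j}M_{ij}$ and the diagonal $\sum_i M_{ii}$ are fixed given $M$, and because $Tr(X^TMX)=\sum_{i,j}M_{ij}\langle X_i,X_j\rangle=\sum_i M_{ii}+2\sum_{i<j}M_{ij}\langle X_i,X_j\rangle$ (using $\langle X_i,X_i\rangle=1$), solving for the same-label sum yields $\sum_{i<j,\,X_i=X_j}M_{ij}=\frac{r-1}{2r}Tr(X^TMX)+\text{const}$. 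Substituting back gives $\log\mathbb{P}=\frac{r-1}{2r}Tr(X^TMX)+\text{const}$, and exponentiating gives the claimed proportionality.

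I expect the main obstacle to be the careful bookkeeping of which quantities are genuinely constant in $X$: although $E_{\mathrm{in}},N_{\mathrm{in}}$ and the community sizes all vary with the labeling, the totals $E$, $N$, $\sum_{i<j}M_{ij}$, and $\sum_i M_{ii}$ do not, and the uniform prior guarantees that no hidden dependence on community sizes survives. Pinning down these invariants — together with fixing a convention for the diagonal of $M$ so that the $\langle X_i,X_i\rangle=1$ terms contribute only a constant — is precisely what makes the two $X$-dependent pieces collapse exactly onto $\frac{r-1}{2r}Tr(X^TMX)$.
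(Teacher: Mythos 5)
Your proof is correct, and while it shares the paper's skeleton --- Bayes' rule with the uniform prior $r^{-n}$, reduction to the likelihood $\mathbb{P}[\textbf{M}=M|\textbf{X}=X]$, and conversion to the quadratic form via the simplex inner products --- your intermediate bookkeeping is genuinely different and leaner. The paper's proof tracks four $X$-dependent quantities ($e_{in}(X)$, $e_{out}(X)$, and the group sizes $g_u(X)$ through $\sum_u\binom{g_u(X)}{2}$ and $\sum_{u<v}g_u(X)g_v(X)$), needs the three counting identities of Remark \ref{AnyLabeling} to relate them, and performs a somewhat opaque manipulation in which the exponent is rescaled by $r$ (raising the expression to the power $\tfrac{1}{r}$) so that the within- and between-label sums acquire the coefficients matching the two inner-product formulas \eqref{in} and \eqref{out}. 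You instead eliminate the cross-label counts immediately against the totals $E$ and $N$, which are fixed by $M$, observe that what remains is exactly the same-label sum $\sum_{i<j,\,X_i=X_j}M_{ij}$, and then apply the single affine identity $\mathbf{1}[X_i=X_j]=\frac{r-1}{r}\langle X_i,X_j\rangle+\frac{1}{r}$ together with $Tr(X^TMX)=\sum_iM_{ii}+2\sum_{i<j}M_{ij}\langle X_i,X_j\rangle$; the constant $\frac{r-1}{2r}$ then falls out of linear elimination rather than exponent juggling, and group sizes never appear. What your route buys is transparency: every discarded term is visibly a function of $M$ alone ($E$, $N$, $\sum_{i<j}M_{ij}$, $\sum_iM_{ii}$), whereas in the paper the $X$-independence of the absorbed terms is spread across several steps and the auxiliary Remark. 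Your closing caveats are also correctly identified and harmless: the diagonal of $M$ contributes only the constant $\sum_i M_{ii}$ whatever convention is fixed, and the $M\leftrightarrow G$ bijection needs $p\neq q$ --- in the degenerate case $p=q$ both sides of \eqref{Probability:DiscreteUnsupservised} are constant in $X$ and the claim holds vacuously.
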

\begin{corollary}\label{Corollary:Supervised}
Let $X_U\in \bigtriangleup_r^{n-k}$ be a discrete community labeling assignment for a set of $n-k$ unsupervised nodes and let $X_L\in \bigtriangleup_r^{k}$ be the true labeling assignment for a subset of $k$ supervised nodes. Let $\textbf{M}\sim SBM(n,r,p,q)$ where the last $k$ rows correspond to the $k$ supervised nodes. Then,
\begin{equation*}\label{Probability:DiscreteSupservised}
\mathbb{P}[\textbf{X}_U=X_U|\textbf{M}=M,\textbf{X}_L=X_L] \propto e^{\frac{r-1}{2r}Tr({X_U\brack X_L}^TM{X_U\brack X_L})}.
\end{equation*}
\end{corollary}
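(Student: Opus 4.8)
The plan is to reduce the corollary directly to Theorem \ref{Theorem:Main} through the definition of conditional probability, so that no fresh computation with the SBM likelihood is needed. First I would express the quantity of interest as a ratio,
\[
\mathbb{P}[\textbf{X}_U=X_U|\textbf{M}=M,\textbf{X}_L=X_L]=\frac{\mathbb{P}[\textbf{X}_U=X_U,\textbf{X}_L=X_L|\textbf{M}=M]}{\mathbb{P}[\textbf{X}_L=X_L|\textbf{M}=M]}.
\]
The central observation is that the joint event $\{\textbf{X}_U=X_U,\,\textbf{X}_L=X_L\}$ coincides with the event $\{\textbf{X}=X\}$ for the full labeling $X={X_U\brack X_L}$ formed by stacking the unsupervised rows above the supervised ones, consistent with the convention that the last $k$ rows and columns of $M$ index the supervised nodes.

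With this identification I would apply Theorem \ref{Theorem:Main} to the numerator, giving
\[
\mathbb{P}[\textbf{X}_U=X_U,\textbf{X}_L=X_L|\textbf{M}=M]\propto e^{\frac{r-1}{2r}Tr\left({X_U\brack X_L}^TM{X_U\brack X_L}\right)}.
\]
It then remains to treat the denominator $\mathbb{P}[\textbf{X}_L=X_L|\textbf{M}=M]$ as a normalizing constant: since $M$ and the supervised labeling $X_L$ are fixed while only $X_U$ ranges over $\bigtriangleup_r^{n-k}$, this denominator does not depend on the free variable $X_U$ and may be absorbed into the proportionality symbol, which yields the claimed expression.

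The one point demanding care --- and the only real (if modest) obstacle --- is the bookkeeping of the normalization, because the proportionality constant here differs from the one in Theorem \ref{Theorem:Main}. Conditioning on $\textbf{X}_L=X_L$ restricts the partition function from a sum over all of $\bigtriangleup_r^n$ down to a sum over only those full labelings consistent with the fixed $X_L$, namely $\sum_{X_U'\in\bigtriangleup_r^{n-k}}e^{\frac{r-1}{2r}Tr({X_U'\brack X_L}^TM{X_U'\brack X_L})}$. I would verify explicitly that the common factor from the normalization in Theorem \ref{Theorem:Main} cancels between numerator and denominator, so that this restricted sum is exactly what the denominator supplies; the exponential weight attached to each admissible labeling is thereby unchanged, and the stated proportionality in $X_U$ follows.
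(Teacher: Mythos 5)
Your proof is correct, and it reduces to Theorem \ref{Theorem:Main} by a genuinely different (if modest) decomposition than the paper's. The paper applies Bayes' theorem on $\textbf{M}$: it observes that the prior $\mathbb{P}[\textbf{X}_U=X_U|\textbf{X}_L=X_L]=r^{-(n-k)}$ and the evidence $\mathbb{P}[\textbf{M}=M|\textbf{X}_L=X_L]$ are both independent of $X_U$, so the posterior is proportional to the likelihood $\mathbb{P}[\textbf{M}=M|\textbf{X}_U=X_U,\textbf{X}_L=X_L]$, and it then reuses the computation \emph{inside} the proof of Theorem \ref{Theorem:Main}, which establishes $\mathbb{P}[\textbf{M}=M|\textbf{X}=X]\propto e^{\frac{r-1}{2r}Tr(X^TMX)}$. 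You instead apply the chain rule with everything conditioned on $\textbf{M}=M$, identify the joint event $\{\textbf{X}_U=X_U,\textbf{X}_L=X_L\}$ with $\{\textbf{X}={X_U\brack X_L}\}$, and invoke the \emph{statement} of Theorem \ref{Theorem:Main} for the numerator, absorbing the $X_U$-independent marginal $\mathbb{P}[\textbf{X}_L=X_L|\textbf{M}=M]$ into the proportionality. What your route buys is black-box use of the theorem: the uniform prior is already baked into its statement, so no modeling assumption needs restating, and your explicit check that the $M$-dependent constant of Theorem \ref{Theorem:Main} cancels between numerator and denominator --- leaving the restricted partition function $\sum_{X_U'\in\bigtriangleup_r^{n-k}}e^{\frac{r-1}{2r}Tr({X_U'\brack X_L}^TM{X_U'\brack X_L})}$ --- is more careful than the one-line bookkeeping the paper records. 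What the paper's route buys is that it isolates the likelihood term, making explicit exactly which probabilistic facts (uniform conditional prior, $X_U$-independence of the evidence) carry the argument; both proofs are equally valid and essentially the same length.
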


\section{Semi-Supervised}\label{Section:SemiSupervised}

For the problem of semi-supervised community-detection, we seek the discrete labeling assignment $X_U\in \bigtriangleup_r^{n-k}$ that maximizes $\mathbb{P}[\textbf{X}_U=X_U|\textbf{M}=M,\textbf{X}_L=X_L]$ ($k$ being the the number of labeled nodes).  From Corollary \ref{Corollary:Supervised}, this is equivalent to finding the labeling assignment $X_U\in \bigtriangleup_r^{n-k}$ that maximizes $Tr({X_U\brack X_L}^TM{X_U\brack X_L})$.  However, in general, exactly finding this discrete ML labeling is exponentially hard and can naively take up to $O(r^{n-k})$-time. Thus, we relax the optimization problem into a SDP and find an `approximate' ML labeling in the relaxed domain.

While SDP has been used in the unsupervised case to solve community detection problems (see references in \cite{abbe_rev}), we present here a new formulation for the Semi-supervised case using the labeled set as a set of constraints in the original SDP.
Specifically, given the graph $\textbf{M}=M$ and labeling $\textbf{X}_L=X_L$, the SDP of our semi-supervised algorithm maximizes the convex function $Tr({X_U\brack X_L}^TM{X_U\brack X_L})$ over the `relaxed' domain $X_U\in \mathbb{R}^{(n-k)\times(n-k+r)}$ and $\lVert X_i\rVert_2=1$ for every $i$.  
We find the optimal assignment for $X_U$ in this relaxed domain by factoring the solution $\mathbb{X}={X_U\brack X_L}{X_U\brack X_L}^T$ of the following Semi-Definite Program:
\begin{equation}\label{Algo:SDP}
\begin{split}
\text{SDP}(M,X_L)&{: } \max_{\mathbb{X}} \text{ Tr}(M\mathbb{X})\;such\;that\\
&\mathbb{X} \succeq 0,\\
&\mathbb{X}_{ii} = 1 \text{ for } 1\leq i\leq n;\;\\&  \mathbb{X}_{ij} = 1 \text{ if } X_{L\_i}=X_{L\_j},\\
&\mathbb{X}_{ij} = -\frac{1}{r-1} \text{ if } X_{L\_i}\neq X_{L\_j};\;\;\;\;\;
\end{split}
\end{equation}
where $X_{U\_i}$ or $X_{L\_i}$ is just another way of writing $X_i$ while simultaneously specifying if node $i$ is currently in the unlabeled or labeled set.

We define the output of SDP$(M,X_L)$ to be the factorized matrix $X={X_U \brack X_L}$ rotated so that the vectors $X_L$ line up with their correct corresponding vectors in $\bigtriangleup_r$.  This SDP can be solved efficiently with programs such as Manopt \cite{Bandeira}. The factorization $\mathbb{X}={X_U\brack X_L}{X_U\brack X_L}^T$ can be found efficiently with Cholesky Decomposition and $X={X_U \brack X_L}$ has a unique rotated solution so long as there is at least one supervised label from each community (if not see Remark \ref{Remark:bestFitSimplex} below). From the constraints $\mathbb{X}_{ii} = 1$, each $X_i$ is a unit vector and referred to as the `vector-label' for node $i$.  To complete the semi-supervised algorithm we recover a discrete labeling by assigning each vector-label in $X_U$ to the closest corresponding vector-label in $\bigtriangleup_r$.

\begin{remark}\label{Remark:bestFitSimplex}
If $X_L$ contains less than $r-1$ distinct vectors, then rotating $X$ such that $X_L$ aligns with its corresponding vectors in $\bigtriangleup_r$ no longer produces a unique solution with respect to $\bigtriangleup_r$.  In this case we must use another algorithm to find the best-fit simplex for our data.  We present one such algorithm in the supplemental material.
\end{remark}

The semi-supervised algorithm (Fig. \ref{fig:semisup}) follows a relax-and-round procedure where the relaxed SDP (\ref{Algo:SDP}) is solved, followed by fitting a simplex to the SDP-output and rounding each SDP vector-label to its closest best-fit-simplex vector. Note that the function $unique(X) $ outputs the set of unique labels in $X$.
%

\begin{figure}[!htpb]
       \begin{tabular}{|rl|} \hline
       \vspace{.075cm}
       & \large \textbf{\textit{Semi-Supervised}}$(M,X_L,r)$\\
        &  \textbf{Input}: $M$: adjacency matrix, $X_L$: labeled set,  \\
        &\;\;\;\;\;\;\;\;\;\;\;\; $r$: number of communities\\
        & \textbf{Output}: $X\in \bigtriangleup_r^n$: complete labeling\\
        & 1. $X' =$ SDP($M,X_L$)\\
        & 2. If $|\text{unique}(X_L)|<r$\\
        &$\;\;\;\;\;\;\;\;$ $\bigtriangleup_{r} =$ bestFitSimplex$(X')$\\
        &$\;\;\;$ else\\
        &$\;\;\;\;\;\;\;\;$ $\bigtriangleup_{r} = $ unique$(X_L)$\\
        & 3. For $i=1$ to $n-k$\\
        &$\;\;\;\;\;\;\;\;$ $X_{U\_i} =\underset{X_j\in \bigtriangleup_r}{\text{argmax }} X'_{U\_i}X_j^T$\\
        & 4. $X = {X_U\brack X_L}$\\
        \hline
        \end{tabular}
\caption{The \textit{Semi-Supervised} algorithm.}\label{fig:semisup}
\end{figure}

\section{Active Learning}\label{Section:Active}
Our novel active learning strategy employs a querying criterion that is based on Maximal Expected Model Change (MEMC). In general, this strategy selects the unlabeled data point, $q\in U$, that if labeled is expected to cause the greatest change to some chosen model $\Phi$ with respect to some chosen norm $T$. 
Built on this idea, a tractable approximation of the distribution $\mathbb{P}[\textbf{X}_q=X_q|\textbf{M}=M,\textbf{X}_L=X_L]$ must be computed for all $q\in U$ and $X_q\in\bigtriangleup_r$, and a model $\Phi$ is to be chosen.

\subsection{Expectation}
In computing the \textit{expected} model change, the expectation must be taken with respect to each node's likelihood distribution over possible label assignments.  However, exactly computing this distribution, $\mathbb{P}[\textbf{X}_i=X_i|\textbf{M}=M,\textbf{X}_L=X_L]$, is an exponentially hard problem with complexity $O(r^{n-k})$ in general.  Hence, we use the classical strategy of ML-approximation \cite{Millar}, to approximate this distribution by
\begin{equation}\label{MLE}
\mathbb{P}[\textbf{X}_i=X_i|\textbf{M}=M,\textbf{X}_L=X_L,\textbf{X}_{U_{\neg i}}=X_{U_{ML}}],
\end{equation}
where $\textbf{X}_{U_{\neg i}}$ is the set of unknown node labels excluding node $i$, and $X_{U_{ML}}$ is the ML-labeling. For any labeling $\textbf{X}_{U_{\neg i}}=X_{U_{\neg i}}$ one can compute the above conditional probability with Lemma \ref{lemma:Conditional}:

\begin{lemma}
\label{lemma:Conditional}
Let $X_{U_{\neg i}}\in \bigtriangleup_r^{(n-k-1)}$ be a discrete community labeling assignment, let $\textbf{M}\sim SBM(n,r,p,q)$ and let $\textbf{M}_i$ be the $i^{th}$ row of $\textbf{M}$ without the $(i,i)$ entry.  Then, for any label $X_i\in\bigtriangleup_r$,
\begin{align}
\nonumber \mathbb{P}[\textbf{X}_i=X_i|\textbf{M}=M,\textbf{X}_L=X_L,\textbf{X}_{U_{\neg i}}=X_{U_{\neg i}}]\\ =\frac{e^{\frac{r-1}{r}(M_i{X_{U_{\neg i}}\brack X_L}X_i^T)}}{\sum_{X_j\in\bigtriangleup_r}e^{\frac{r-1}{r}(M_i{X_{U_{\neg i}}\brack X_L}X_j^T)}}.
\label{Equation:ConditionalDiscrete}
\end{align}
\end{lemma}

In order to approximate the ML-estimation defined in (\ref{MLE}), we calculate (\ref{Equation:ConditionalDiscrete}) for an approximate ML-labeling: 
we define a generalized ML-approximation of $\mathbb{P}[\textbf{X}_i=X_i|\textbf{M}=M,\textbf{X}_L=X_L]$ by generalizing (\ref{Equation:ConditionalDiscrete}) to  condition directly on the set of SDP-output vector-labels $X'_U\in\mathbb{R}^{(n-k)\times(n-k+r)}$. Thus, the new MLE-approximation for $\mathbb{P}[\textbf{X}_i=X_i|\textbf{M}=M,\textbf{X}_L=X_L]$ becomes
\begin{align}\label{Equation:SDP-MLE}
\nonumber \hat{\mathbb{P}}[\textbf{X}_i=X_i|\textbf{M}=M,\textbf{X}_L=X_L,\textbf{X}_{U_{\neg i}}=X'_{U_{\neg i}}]\\  = \frac{e^{\frac{r-1}{r}(M_i{X'_{U_{\neg i}}\brack X_L}X_i^T)}}{\sum_{X_j\in\bigtriangleup_r}e^{\frac{r-1}{r}(M_i{X'_{U_{\neg i}}\brack X_L}X_j^T)}},
\end{align}
where $X'_U$ is the output of SDP$(M,X_L)$.  Note that $\hat{\mathbb{P}}$ still defines a probability distribution but it is no longer conditioned on a discrete community labeling.

\subsection{Model Change}

 We use the ML-approximate distribution from (\ref{Equation:SDP-MLE}) to define the `model change' querying strategy.  First, we define the model to be the $(n\times r)$-matrix:\small
\begin{equation}\begin{split}
\Phi(M,X_L,\bigtriangleup_r)_{i,j}\doteq\\& \hspace{-19mm} \hat{\mathbb{P}}[\textbf{X}_i=X_{i}^{(j)}|\textbf{M}=M,\textbf{X}_L=X_L,\textbf{X}_{U_{\neg i}}=X'_{U_{\neg i}}]
\end{split}\end{equation}\normalsize
where, $X'_U=$ SDP$(M,X_L)$ and $X_i^{(j)}$ is the $j^{th}$ vector-label in $\bigtriangleup_r$.  For any already labeled node $i$, the probability distribution $\hat{\mathbb{P}}[\textbf{X}_{L\_i}=X_{L\_i}^{(j)}|\textbf{M}=M,\textbf{X}_L=X_L,\textbf{X}_{U_{\neg i}}=X'_{U_{\neg i}}]$ is defined to be the delta-function with mass $1$ on the node's true label.  
We define the model change, $\delta$, to be the sum of total variation distances between each node's probability distribution before and after a particular assignment $\textbf{X}_q=X_q$:
\begin{equation}
\delta(\Phi,X_q)\doteq \lVert\Phi(M,[X_L,X_q],\bigtriangleup_r)-\Phi(M,X_L,\bigtriangleup_r)\rVert_T.
\end{equation}

Thus, in order to maximize the Expected Model Change (EMC):
 \begin{equation}\label{eq:EMC}
\begin{split}
&\text{EMC}(M,X'_U,X_L,\bigtriangleup_r) \doteq \\
&\sum_{X_q\in \bigtriangleup_r} \hat{\mathbb{P}}[\textbf{X}_q=X_{q}|\textbf{M}=M,\textbf{X}_L=X_L,\textbf{X}_{U_{\neg q}}=X'_{U_{\neg q}}]\\ &\hspace{60mm} \cdot \delta(\Phi,X_q)
\end{split}
\end{equation}\normalsize
 we query the label of node
 \begin{equation}\label{Equation:MEMC}\begin{split}
  \hspace{-10mm}q=\text{MEMC}(M,X'_U,X_L,\bigtriangleup_r)\\& \hspace{-30mm}= \underset{q\in U}{\text{argmax }}EMC(M,X'_U,X_L,\bigtriangleup_r),
 \end{split}
\end{equation}\normalsize

\subsection{The algorithm}
Pseudo-code for the entire active MEMC algorithm is presented in Fig. \ref{fig:algs}.
\begin{figure}[!htpb]
\center
         \begin{tabular}{|rl|}\hline
       & \large \textbf{\textit{MEMC-Active}}$(M,X_L,r,Q)$\\
        &  \textbf{Input}: $M$, $X_L$, $r$, $Q$: query budget\\
        & \textbf{Output}: $X\in \bigtriangleup_r^n$\\
        & 1. For $queried =1$ to $Q$\\
        &$\;\;\;\;\;\;\;\;$a) $X' =$ SDP$(M,X_L)$\\
        &$\;\;\;\;\;\;\;\;$b) If $|\text{unique}(X_L)|<r$\\
        &$\;\;\;\;\;\;\;\;\;\;\;\;\;\;\;\;$i) $\bigtriangleup_{r} =$ bestFitSimplex$(X')$\\
        &$\;\;\;\;\;\;\;\;\;\;\;\;\;\;\;\;$ii) $q =\text{Anchor}(M,X'_U,X_L,\bigtriangleup_r)$\\
        &$\;\;\;\;\;\;\;\;\;\;\;$ Else\\
        &$\;\;\;\;\;\;\;\;\;\;\;\;\;\;\;\;$i) $\bigtriangleup_{r} = $ unique$(X_L)$\\
        &$\;\;\;\;\;\;\;\;\;\;\;\;\;\;\;\;$ii) $q =\text{MEMC}(M,X'_U,X_L,\bigtriangleup_r)$\\
        &$\;\;\;\;\;\;\;\;$c) $X_q=\text{Label}(\textbf{X}_q)$\\
        &$\;\;\;\;\;\;\;\;$d) $X_L={X_L \brack X_q}$\\
        & 2. $X = \text{Semi-Supervised}(M,X_L,r)$\\
        \hline
       \end{tabular}
\caption{The \textit{Active Learning} MEMC algorithm}
\label{fig:algs}
\end{figure}
The procedure $\text{Anchor}(M,X'_U,X_L,\bigtriangleup_r)$ is a querying mechanisms for the early stages when $X_L$ may not yet contain all existing community labels.  We further elaborate on the anchor nodes selection and additional speedup steps for the SDP in the supplementary material.

\section{Sample Complexity Analysis}
\label{sec:optimality}
In this section we provide theoretical sample complexity analysis to the MEMC criterion. As will be shown below, and in our empirical validation in Fig. \ref{Figure:SNR_SMALL1}, the error reduction rates that MEMC exhibits for the SBM differ throughout the active learning process and are characterized by 2 to 3 different phases, depending on the error type and the SNR regime. Therefore, our sample complexity analysis provides a more informative picture by proving the number of queries needed to recover the communities by MEMC, instead of the arbitrary error complexity analysis that is typical for active learning in standard classification tasks.

To facilitate our analysis we consider here the 2-community symmetric - $SSBM(n,2,a,b)$ where $p=\frac{a}{n}$ and $q=\frac{b}{n}$. Our results do not loose their utility for $r>2$ communities.  We consider in the analysis the two different settings of SBM graphs: above and below the SNR detection threshold (see definition in Theorem \ref{conj:detect_thresh}).

To set the stage for our analysis we first define the differential degree of a node with respect to an assignment $\tilde{X}$ of labels:
\begin{definition}
The \textit{Differential degree} of a node $v_i$ w.r.t to labeling $\tilde{X}$ is defined as
\begin{equation}
\begin{split}
d_{\tilde{X}}(v_i) = |\{v_j \in V| v_j\sim v_i,\tilde{X}_i = \tilde{X}_j\}| -\\
 |\{v_j \in V| v_j\sim v_i,\tilde{X}_i \neq \tilde{X}_j\}|
\end{split}
\end{equation}
\end{definition}
A node $v$ has a \textit{majority} with respect to a labeling $\tilde{X}$ if $d_{\tilde{X}}(v)>0$. Otherwise, $v$ has a \textit{minority}. We characterize two types of possible errors in the labeling approximation $\tilde{X}$ with respect to the true labels $X$:
\begin{definition}\label{def:typ1}
Type-1 error: $\tilde{X}_i\neq X_i$ where $v_i$ has a majority with respect to $X$.
\end{definition}
\begin{definition}\label{def:typ2}
Type-2 error: $\tilde{X}_i\neq X_i$ where $v_i$ has a minority with respect to $X$.
\end{definition}
Clearly, type 1 and type 2 error nodes comprise the total of possible errors. As will be shown below, the sample complexity of each error type is qualitatively different and depends on the stage of the active learning process, or, more specifically, on the distribution of the currently unlabeled nodes differential degree.

We note that in the problem setting of SBM the Maximum-a-Posteriori (MAP) estimator is such that it assigns the label
\begin{equation}
\tilde{X}_i = \arg \max_{\tilde{X}_i} P\{X_i =\tilde{X}_i|S=s,R=s \},
\end{equation}
where $S$ and $R$ are the random variables representing the number of neighbors of $v_i$ that have similar and opposite labels, respectively. Because label assignments are an independent process in SBM, the MAP estimator is equal to the ML estimator:
\begin{equation}
\tilde{X}_i = \arg \max_{\tilde{X}_i} P\{S=s,R=s|X_i =\tilde{X}_i \}.
\end{equation}
  Therefore, in order to facilitate our analysis we will consider the local MAP classier (shown above to be equivalent to the ML classier), and scalar labels.
\begin{figure}
\begin{center}
\begin{tabular}{c c}
\includegraphics[height=1.4in, width =1.5in]{./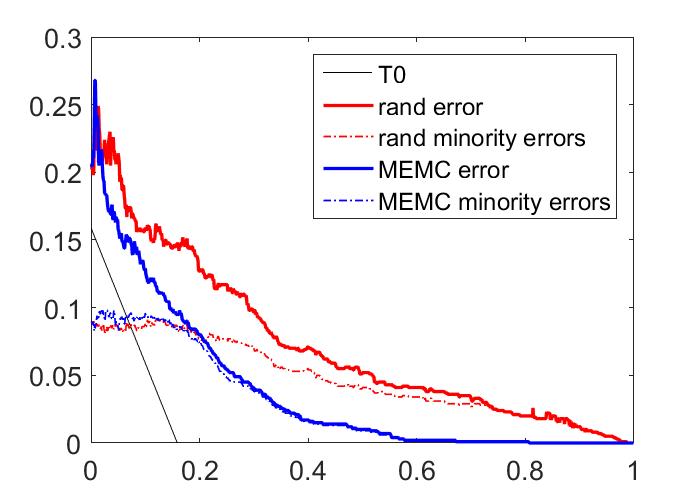}&\hspace{-7mm}
\includegraphics[height=1.4in, width =1.5in]{./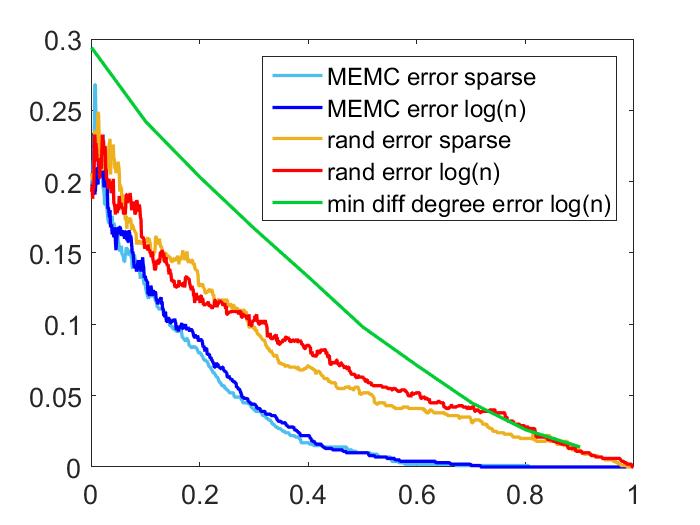}\\
(A)&\hspace{-7mm}(B)\\
\includegraphics[height=1.4in, width =1.5in]{./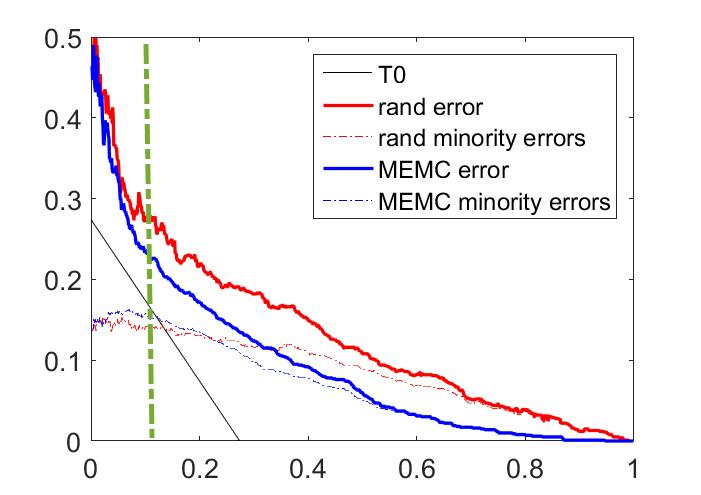}&\hspace{-7mm}
\includegraphics[height=1.4in, width =1.5in]{./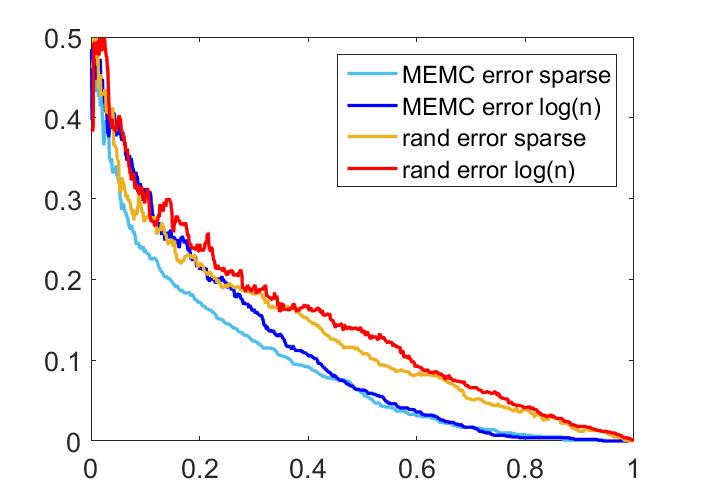}\\
(C)&\hspace{-7mm}(D)\\
\end{tabular}
\end{center}
\caption{Comparison of MEMC and random error vs. fraction of data queried. \textbf{A,B:} $SNR\approx 1.3$, sparse (A) logarithmic (B). \textbf{C,D:} $SNR\approx 0.6$ sparse (C) logarithmic (D). Dashed vertical line indicating transition from super-linear to linear phase.}
\label{Figure:SNR_SMALL1}
\end{figure}
\subsection{The SNR $>$ 1 case}
For the case of SNR$>1$ we provide our main result in Theorem \ref{thm:sampComplxHSNR} for recovery.
Theorem \ref{thm:sampComplxHSNR} is also validated by our empirical observations in Fig. \ref{Figure:SNR_SMALL1}-A,B in section \ref{Section:Results}. In particular, our experiments in Fig. \ref{Figure:SNR_SMALL1}-A,B validate that there are 2 different phases in which different types of errors are corrected at different rates throughout the recovery via the active learning process. This observation encourages us to examine the sample complexity of a full recovery to provide a comprehensive analysis.
\begin{theorem}
\label{thm:sampComplxHSNR}
(Sample Complexity) Consider the network $M\sim SSBM(n,2,p,q)$ such that $SNR>1$. Let $\tilde{X}=SDP(M,X_L)$ be the approximate scalar labels of the nodes, and $X_L$ is a set of known labels for a subset of nodes. Let the total of number of errors in $\tilde{X}$ be $m=m_1+m_2$, where $m_i$ corresponds to type-$i$ error. Then the sample complexity of MEMC for full recovery is at most
\begin{equation}
\label{eq:sampComplxHSNR_MEMC}
m_1 + (n-m_1)\Big(\exp \Big(-\Big(\sqrt{\frac{b}{2}}-\sqrt{\frac{a}{2}}\Big)^2\Big)+\sum_{k=1}^{-l_c}P(k;a,b)\Big),
\end{equation}
where $l_c$ is the critical differential degree such that a.a.s no nodes exists s.t. $d_X(v)<l_c$, and $P(k;a,b)$  is the Skellam distribution representing the probability of a difference between two Poisson random variables with means $a$ and $b$. Specifically, $l_c = inf\{k|(P(k;a,b)\leq k)=o(n^{-1})\}$. On the other hand Random will require order $n$ queries.
\end{theorem}
\noindent \textbf{Proof sketch:} The proof relies on three important auxiliary results - Lemmas \ref{thm:MEMC}, \ref{lemma:skellam}, and corollary \ref{corollary:SNRb1_typ2_vs_correct}  (the full proof of Theorem \ref{thm:sampComplxHSNR} and the auxiliary results are provided in the supplementary material). First, in Lemma \ref{thm:MEMC} we prove that type-1 errors have higher model change than other nodes of similar degree. Therefore, MEMC will initially query the type-1 error nodes, corresponding to the first $m_1$ component in (\ref{eq:sampComplxHSNR_MEMC}). Second, following the correction of type-1 error nodes MEMC triggers a search around zero differential degree for type-2 nodes which show preference for correcting type-2 errors over querying correctly assigned nodes, as per Corollary \ref{corollary:SNRb1_typ2_vs_correct}. Lastly, the type-2 error search range is proved to be symmetrically bounded around zero for a differential degree range that can be directly inferred from $a$ and $b$'s values as provided in Lemma \ref{lemma:skellam}. The random criterion will select all types of nodes at equal probability within an unbounded range of the possible differential degrees.

The sample complexity of the different algorithms for SNR$>1$ is summarized in Fig. \ref{fig:SNRG1prob}, where we also plot the curve for a hypothetic optimal active learner which makes only T0 errors of type-2 and queries only them. We provide below the auxiliary results for prooving Theorem \ref{thm:sampComplxHSNR}.

\begin{lemma}
\label{thm:MEMC}
Consider the network $M\sim SSBM(n,2,p,q)$. Let $\tilde{X}=SDP(M,X_L)$ be the scalar labels of the nodes, and $X_L$ is a set of known labels for a subset of nodes. Let $v^1$ be a node of differential degree $\delta>0$ for which $\tilde{X}$ has made a type-1 error, let $v^2$ be a node of differential degree $-\delta$ for which $\tilde{X}$ has made a type-2 error, and let $v^3$ be a correctly assigned node with differential degree $\delta$. Then $EMC$ satisfies
\begin{equation}
EMC(v^1)> EMC(v^2)= EMC(v^3).
\end{equation}
\end{lemma}

Next, we prove the conditions under which MEMC will have preference to correct type-2 nodes over querying correctly assigned nodes. In particular, Corollary \ref{corollary:SNRb1_typ2_vs_correct} guarantees that if there are type-2 error nodes with absolute differential degree that is lower than other node's differential degree (as typical for type-2 nodes), they will be queried by MEMC:
\begin{corollary}
\label{corollary:SNRb1_typ2_vs_correct}
Given $M\sim SSBM(n,2,a,b)$. Let $v^2_j$ be a type-2 node, and let $v^3_i$ be a correctly assigned node. If $|d_{\tilde{X}}(v^2_j)|<|d_{\tilde{X}}(v^3_i)|$ then $EMC(v^2_j)>EMC(v^3_i)$.
\end{corollary}

Lastly, Lemma \ref{lemma:skellam} provides bounds on the probability of having minority nodes which cause type-2 errors. While majority nodes can be classified correctly by MAP and ML classifiers once most of their neighbors are correctly identified, minority error nodes (i.e. type-2 error nodes) need to be queried directly in order to be corrected. Therefore estimating their proportion in the overall set is crucial for bounding sample complexity analysis for querying type-2 error nodes. In particular, Lemma 3 provides a bound on the probability of a minority node that can be used to derive a bound on the minimal differential degree $l_c$ that minority nodes could have. 
\begin{lemma}
\label{lemma:skellam} Given $SSBM(n,2,a,b)$, the probability of a minority node $P(c_{out}\geq c_{in})$ satisfies
\begin{equation}
\begin{split}
\label{eq:skellam}
&\frac{\exp(-(\sqrt{\frac{b}{2}} - \sqrt{\frac{a}{2}})^2)}{(\frac{a+b}{2})^2} - \frac{\exp(-(\frac{b}{2} +\frac{a}{2}))}{\sqrt{2}ab} - \frac{\exp(-(\frac{b}{2} +\frac{a}{2}))}{2ab}\\& \leq P(c_{out}\geq c_{in}) \leq \exp\Big(-\Big(\sqrt{\frac{b}{2}} - \sqrt{\frac{a}{2}}\Big)^2\Big).
\end{split}
\end{equation}
\end{lemma}

%

\begin{figure}
\hspace{-6mm}        \includegraphics[scale=0.25]{./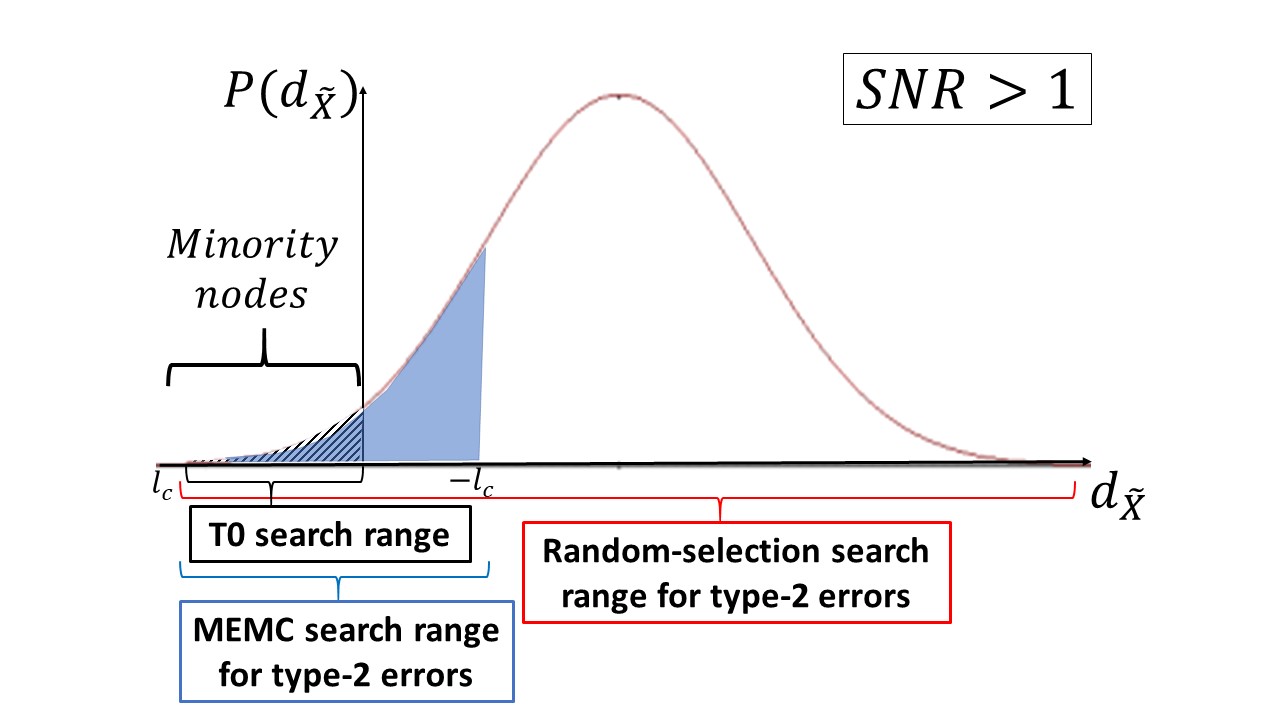}
 \label{fig:prob_sep}
\caption{Sample complexity for recovery in the $SNR>1$ regime. The probability mass of the differential degree $d_{\tilde{X}}$ is plotted, with regions marked for minority nodes and MEMC search region within critical degrees $[l_c,-l_c]$.}
\label{fig:SNRG1prob}
 \end{figure}

\subsection{The SNR $<$ 1 case}
Below the SNR threshold there is no unsupervised consistent estimator that can provide clustering better than a trivial guess \cite{Mossel_3}. In the semi-supervised setting, we observe a super-linear reduction in error with each query at the initial stage of querying (see Fig. \ref{Figure:SNR_SMALL1} C and D). This fast improvement rate is due to the concentration of the differential degree (for $\tilde{X}$) at zero, causing every query-flip to trigger cascades of label flips. We first define the phenomena of \textit{cascades}, and quantify its rate in Lemma \ref{lemma:cascades} below.
\begin{definition}\label{def:cascades} (Cascades)
  Consider the network $M\sim SSBM(n,2,p,q)$. Let $\tilde{X}=Alg_\Psi(M,X_L)$ be the predicted labels of the nodes according to a recovery algorithm $\Psi$, and $X_L$ is a set of known labels for a subset of nodes. Let $v_q$ be a node whose label has been queried. Then a cascade occurs if there exists at least one path $v_{q_0}\sim v_{q_1},...,v_{q_{l-1}}\sim v_{q_l},\;where\; v_{q_0}=v_q,\;v_{q_i}\sim v_{q_{i+1}},\;and \;l\geq 2$, such that $\tilde{x}_{q_i}\neq \tilde{x}^{up}_{q_i}$ for every $0\leq i\leq l$, where $\tilde{X}^{up}  = Alg_\Psi(M,X_L \cup x_q)$.
\end{definition}

Cascades give rise to the super-linear error reduction observed in the $SNR<1$ regime, where the expected differential degree with respect to $\tilde{X}$ is concentrated at and closely around 0. Lemma \ref{lemma:cascades} provides the estimation of the cascades effect on the overall community detection accuracy as a function of $a$ and $b$:
\begin{lemma}
\label{lemma:cascades}
Consider the connected network  $M\sim SSBM(n,2,p,q)$ with average degree $d$, and $SNR<1$, such that in expectation the differential degree $d_{\tilde{X}}(v)$ with respect to the $\tilde{X}$ approximation is zero. Then the expected number of majority(minority) nodes that correctly change their assignment after a query $v_q$ followed by a MAP update is at least
\begin{equation}\label{eq:cascades_maj}
  N_{maj(min)} = \frac{dp_{maj(min)}}{1-dp_{maj(min)}}(1-(dp_{maj(min)})^{\log_d(n)}),
\end{equation}
where $p_{maj(min)}$ is the probability of a majority(minority) node flipping its label correctly following its neighbor change of label and a MAP update. $p_{maj} = (1-2\bar{p}_{min})\cdot\Big(\frac{a-b}{4(a+b)}\Big)$, where $\bar{p}_{min}$ is assigned with the upper bound for the probability of a minority node in Equation (\ref{eq:skellam}), and, $p_{min} = \bar{p}_{min}\cdot\Big(\frac{b-a}{4(a+b)}\Big)$.
\end{lemma}

We note the dependency of (\ref{eq:cascades_maj}) in $n$ in the sparse regime: if $dp_{maj(min)}\approx d$, then$ (1-(dp_{maj(min)})^{\log_d(n)}\approx O(n)$ which will result in an order $n$ correction! 

Following the super linear error reduction phase, MEMC exhibits a linear reduction with preference to type-1 nodes succeeded by a bounded search for type-2 nodes, similarly to what was shown for the SNR$>1$ case. We note that these 3 stages of active learning are also observed in Fig. \ref{Figure:SNR_SMALL1}-C,D. We provide our main results on the sample complexity for recovery in the SNR $<$ 1 case in the following Theorem:
\begin{theorem}
\label{thm:sampComplxLSNR}
(Sample Complexity) Given the network  $M\sim SSBM(n,2,p,q)$ with average degree $d$, and $SNR<1$. Let $m=m_1+m_2$ be the total number of errors in $\tilde{X}$, where $m_i$ corresponds to errors of type-i. Then the expected sample complexity of MEMC is
\begin{equation}\begin{split}\label{eq:sampComplxLSNR_MEMC}
&\frac{n}{2(N_{maj}+N_{min})}+\Big(m_1 - \frac{nN_{maj}}{2(N_{maj}+N_{min})} \Big)\\& + \Big(n-\frac{nN_{min}}{2(N_{maj}+N_{min})} - m_1\Big)\\& \cdot\Big(\sum_{k=1}^{-l_c}P(k;a,b)+\exp\Big(-\Big(\sqrt{\frac{b}{2}}-\sqrt{\frac{a}{2}}\Big)^2\Big)\Big),
\end{split}\end{equation}
where $N_{maj(min)}$ are as defined in Lemma \ref{lemma:cascades}. Conversely, the sample complexity of Random is
$n$.
\end{theorem}

Theorem \ref{thm:sampComplxLSNR} concludes that MEMC number of queries is significantly lower than the Random criterion, even though both MEMC and Random demonstrate initially a super-linear error reduction.
To this end, there are two important observations that address MEMC's superiorly:  First, following Theorem \ref{thm:sampComplxLSNR}, type-1 nodes are recovered initially and the search for type-2 nodes is within a bounded differential degree range $[-l_c,l_c]$, where they are concentrated. Thus, MEMC search covers only a fraction of the $n$ nodes, while it still guarantees recovery of the type-2 errors. On the other hand, Random slection samples the full range of differential degrees for both types of error nodes. Second, in the sparse regime multiple connected components may exist. MEMC has preference to select the components of largest diameter, which will introduce higher model change and therefore deeper correction cascades, as seen in Fig. \ref{Figure:SNR_SMALL1} C,D. On the other hand, Random does not inflict preference for larger connected components.


\begin{figure*}
\begin{center}
\begin{tabular}{c c c c c}
\hspace{-5mm}\includegraphics[height=1.2in, width =1.3in]{./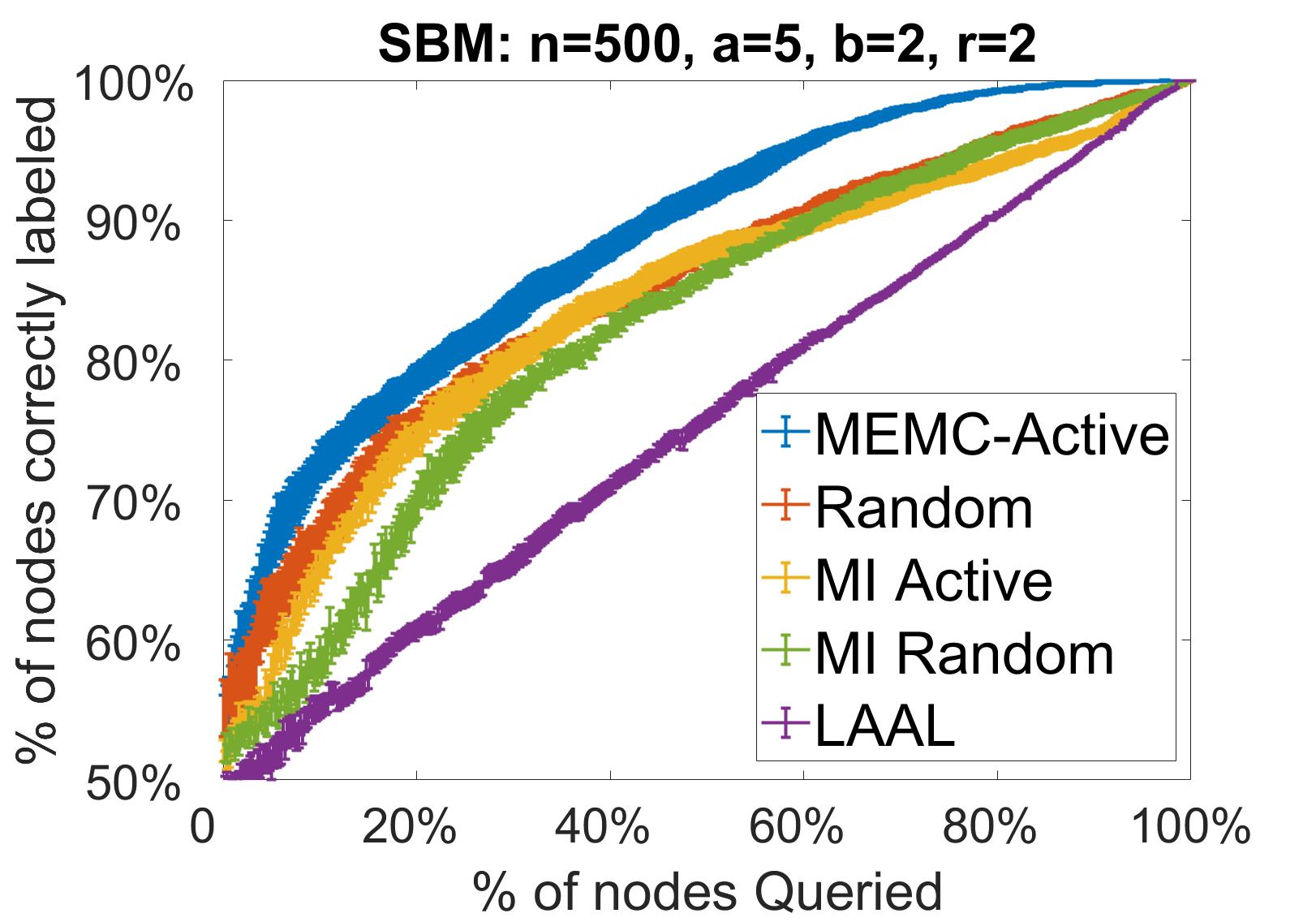}&
\includegraphics[height=1.2in, width =1.2in]{./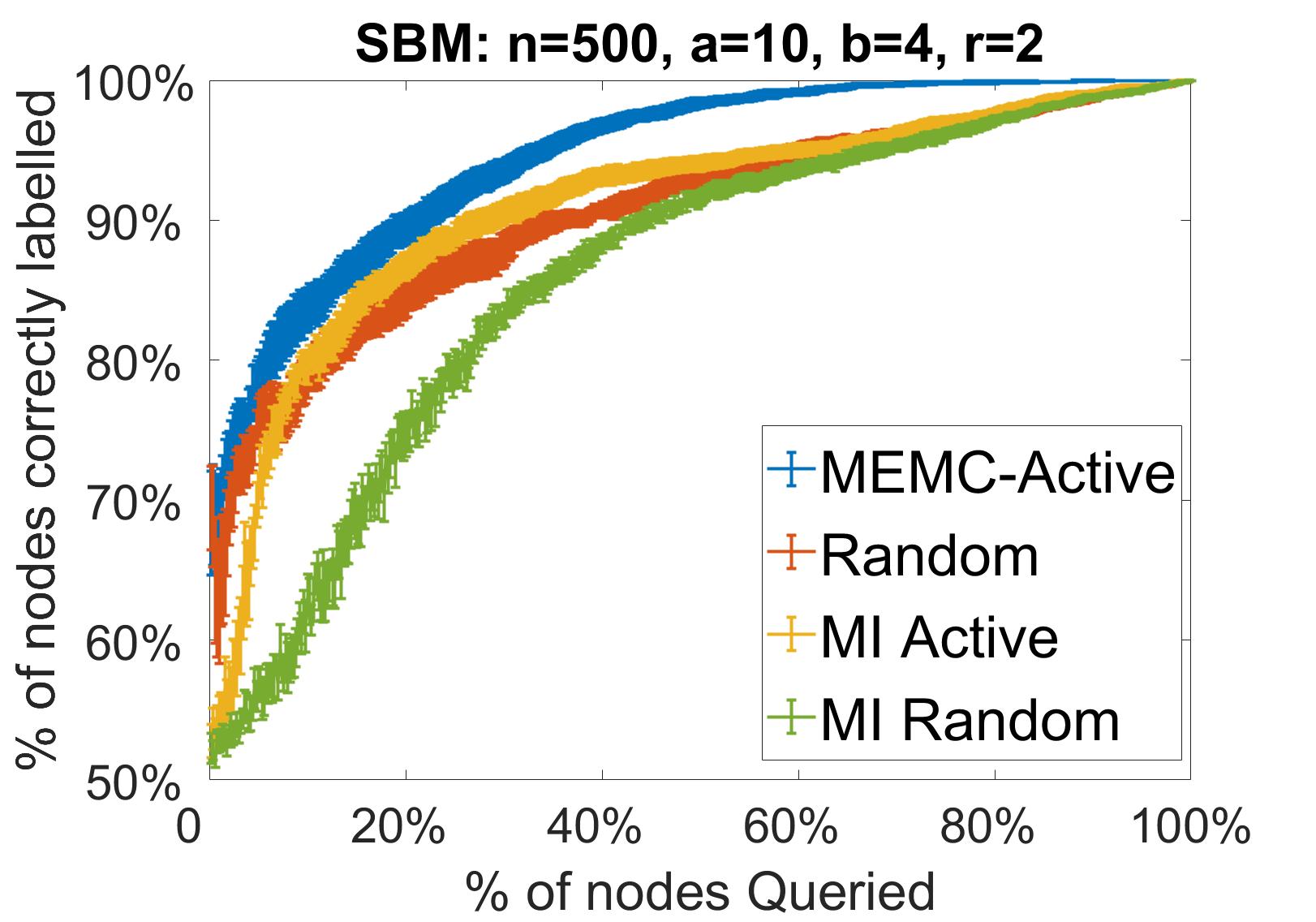}&
\includegraphics[height=1.2in, width =1.2in]{./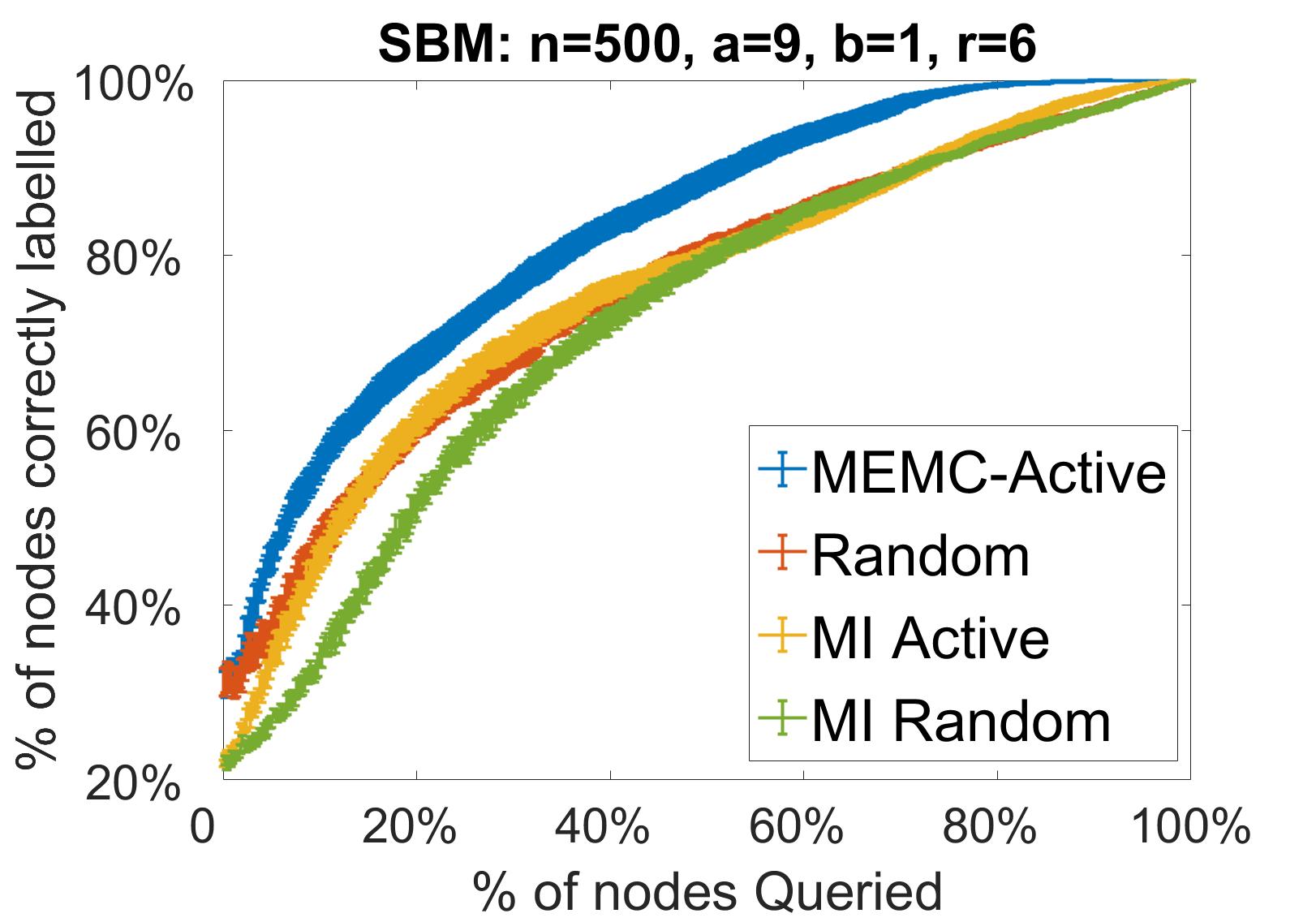}&
\includegraphics[height=1.2in, width =1.2in]{./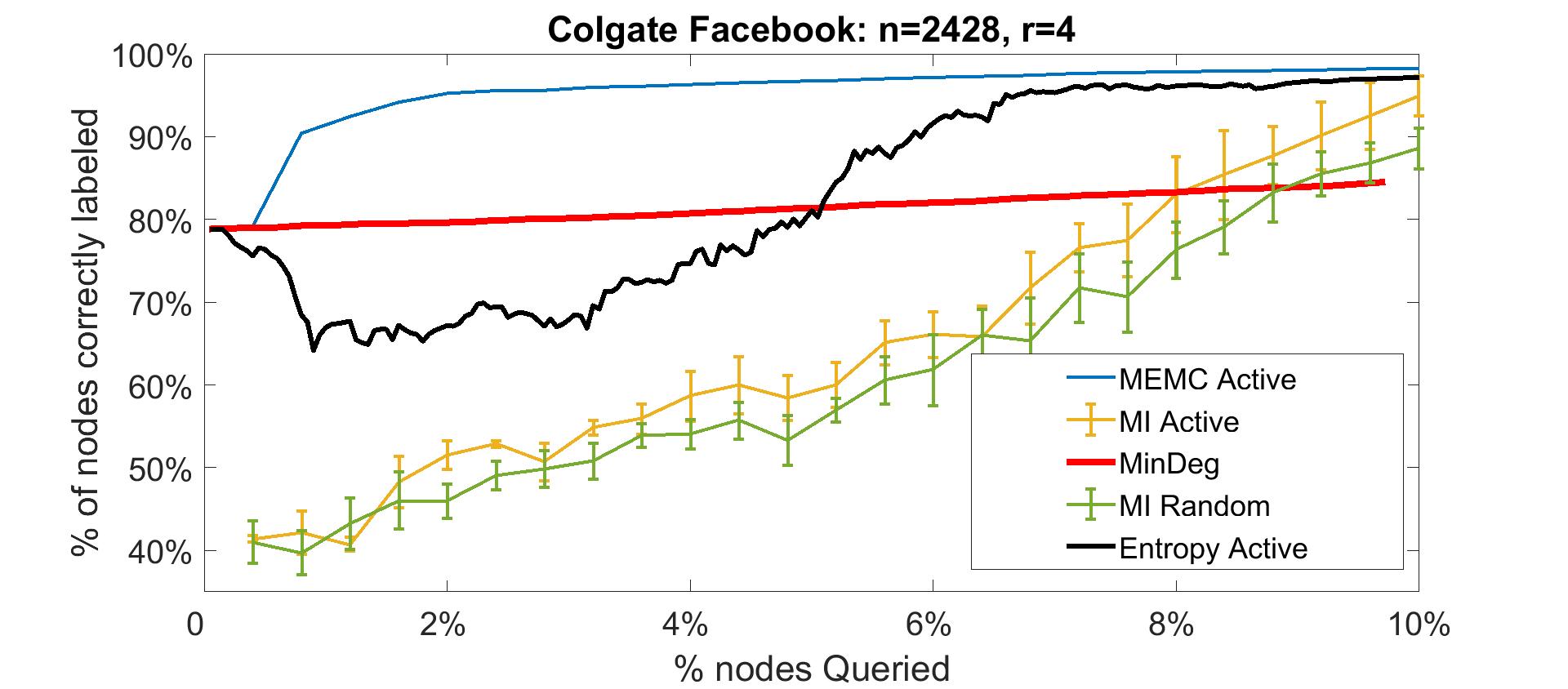}&
\includegraphics[height=1.2in, width =1.2in]{./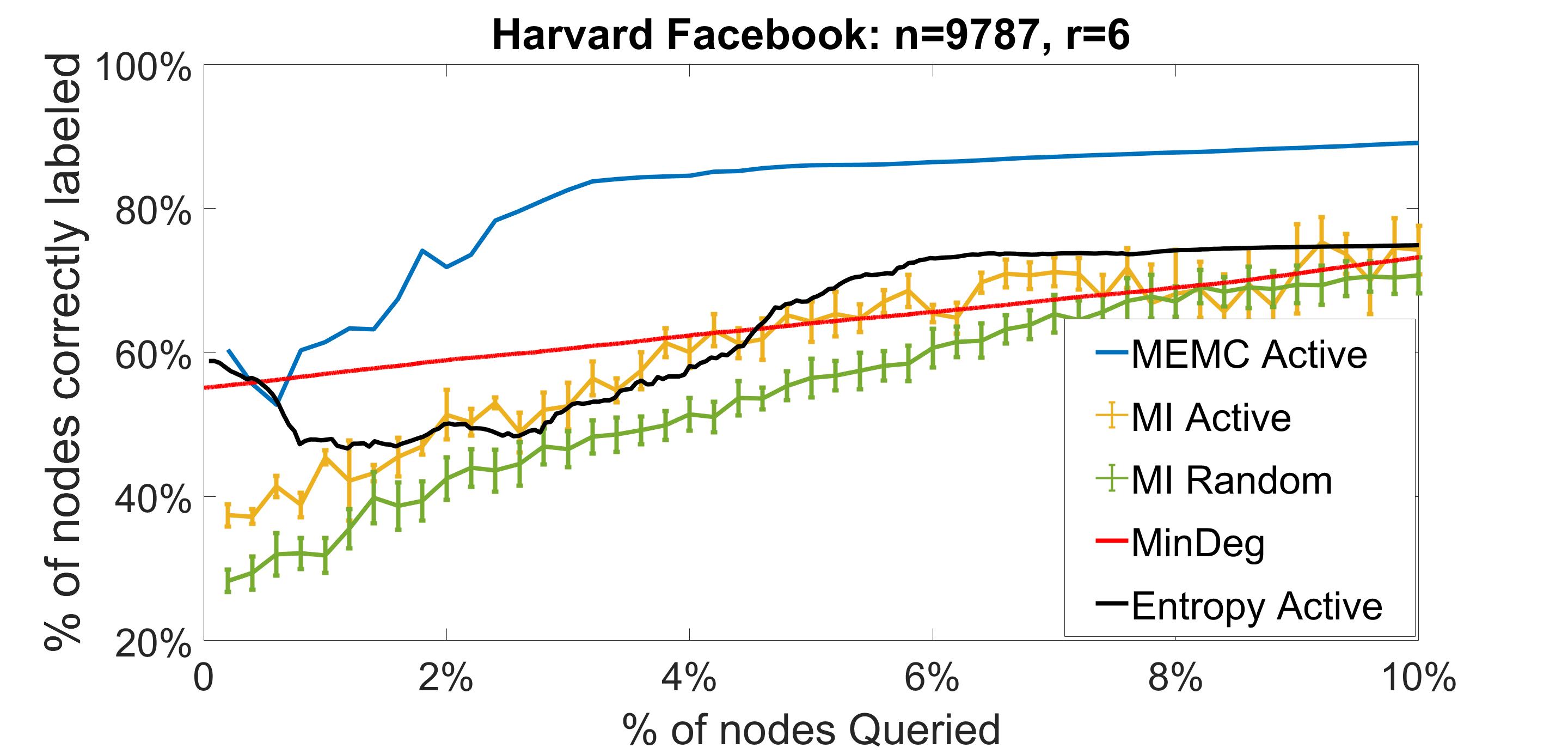}\\
(A) & (B)& (C) & (D)& (E)\\
\end{tabular}
\end{center}
\caption{Accuracy vs. Percentage of nodes queried. SBM: (A) $a=5$, $b=2$, $r=2$, (B) $a=10$ $b=4$ $r=2$, (C) $a=9$, $b=1$, $r=6$. Social (D) Colgate FB, $r=4$, (E) Harvard FB, $r=6$.}
\label{Figure:Plots}
\end{figure*}
\subsection{Logarithmic vs. Sparse Degree Regimes}
In the $SNR>1$ regime both sparse and logarithmic degree regimes have similar behavior (see Fig. \ref{Figure:SNR_SMALL1} A and B).  This similarity is a direct result of the general preference of MEMC to correct erroneous nodes. Moreover, since cascades do not take place in the $SNR>1$ regime, the degree regime, which influences the diameter of connected components, plays no role here.

In the $SNR<1$ case in the sparse regime the probability of zero differential degree is significantly higher than that of the logarithmic regime where the differential degree distribution is much more spread-out due to the higher degree magnitude. In the sparse degree regime significantly more mass of the distribution is localized at zero because $a$ and $b$ are small constants. As a result, correction cascades will occur with higher probability in the sparse regime. 


\section{Experimental Results}\label{Section:Results}
\noindent \textbf{Empirical sample-complexity study.}
To validate our theoretical analysis we provide an empirical study of the optimality of MEMC. The study comprises of an error analysis including type-1,2 error evolution. MEMC's curves are also compared with those of a Random selection.  We also provide the curve for a hypothetic optimal active algorithm which makes only T0 errors of type-2 and queries only them. The study covers SNR that is above (Fig. \ref{Figure:SNR_SMALL1} A, B) and below  (Fig. \ref{Figure:SNR_SMALL1} C, D) the detection threshold, and the cases of sparse vs. logarithmic average degree regimes. We underline in the following the main observations:
\begin{enumerate}
\item $SNR>1$: Type-2 errors remain fixed as type-1 errors are corrected, confirming Lemma \ref{thm:MEMC}. Random selection, conversely, presents a slower error improvement, due to it's unbounded search. Note T0 is larger than actual type-2 errors since some minority nodes are correctly classified by chance.
\item $SNR>1$: MEMC error curve aligns early with the type-2 error curve indicating transition to querying type-2 nodes, confirming Corollary \ref{corollary:SNRb1_typ2_vs_correct}.
\item $SNR>1$: The error curve for the minimal differential degree sampling methodology of \cite{Gadde} (designed for SNR$>1$ at logarithmic regime) is presented in Fig. \ref{Figure:SNR_SMALL1}-B, 
    demonstrating that using minimal differential degree without updates is sub-optimal.
\item $SNR<1$: A \textit{super-linear} error reduction is observed for both MEMC and Random. However, MEMC consistently has a significant advantage (of about 7\%) over Random due to selecting the cascades introducing a larger model change.
\item $SNR<1$: the sparse regime has deeper and longer cascades phase than the logarithmic regime, in line with the sparse SBM having more mass localized at zero differential degree.
\item $SNR<1$: The super-linear phase is followed by a linear phase as in the SNR$>1$ regime. At 50\% queries MEMC's error curve aligns with the type-2 error curve. Perfect recovery is achieved at 80\% for MEMC vs. 100\% for Random.
\end{enumerate}
\noindent \textbf{Performance experiments.}
We present experiments for 2- and multi-community detection on SBM networks and real-world social networks (see data links and details at \cite{stanford_socdata}). The authors suggest that the student’s class years form interesting ground truth communities in Facebook social network. We test the ability of algorithms ‘Active,’ and various active learners to cluster Colgate University students according to the 4 class years 2006-09 and Harvard University students according to the 6 class years 2004-09.

The performance of MEMC (`MEMC-Active') is compared with the two known active learners developed for SBMs \cite{Moore} (`MI') and \cite{Gadde} ('MinDeg'), and with the graph-based active learning algorithm of \cite{Kushnir} (`LAAL'). A random selection ('Random') and an Entropy ('Entropy-Active) criterion using our SDP are also presented as baselines to MEMC's criterion.
The advantage of MEMC is prominent: 'MI' uses max mutual information of a query with the current labeling and therefore has sub-optimal performance when labelling is still scarce. MinDeg uses the minimal differential degree without updates (applicable only for SNR$>1$). 'Entropy' uses our SDP formulation, however, the max-entropy criterion focuses on uncertainty and does not explore nodes that can have more extensive model change. Thus only towards later stages of refinement the entropy provides improved gains. Finally, 'LAAL' assumes smoothness of the labeling function over the graph which does not exist for SBMs.


\newpage
\onecolumn

\appendix
\section{Proofs}
\subsection{Theorem \ref{conj:detect_thresh}}
\begin{proof}
see \cite{Massoulie,Mossel_1} for the details of the proof.
\end{proof}


\subsection{Theorem \ref{Theorem:Main}}
\begin{proof}

We start with some definitions:

\begin{definition}\label{yin}
For a given adjacency matrix $\textbf{M} = M$, let $e=e(M)$ be the set of edges in $M$. Also, let $|e|$ be the size of the set $e$ (i.e. the total number of edges).
\end{definition}

\begin{definition}
Given the complete labeling assignment $X\in\bigtriangleup_r^n$, let $e_{in}(X)$ be the number of edges where both endpoint nodes have the same label according to $X$. Then, since each $X_i$ is a unit vector, \begin{equation}\label{in}
e_{in}(X)=\frac{1}{2}\sum_{\substack{X_i=X_j\\(i,j)\in e}}\langle X_i, X_j\rangle.
\end{equation}
\end{definition}

\begin{definition}
Given the complete labeling assignment $X\in\bigtriangleup_r^n$, let $e_{out}(X)$ be the number of edges where both endpoint nodes have different labels according to $X$. Then, since each $X_i$ lies on the simplex, \begin{equation}\label{out}
e_{out}(X)=-\frac{r-1}{2}\sum_{\substack{X_i\neq X_j\\(i,j)\in e}}\langle X_i, X_j\rangle.
\end{equation}
\end{definition}

\begin{definition}
Given the complete labeling assignment $X\in\bigtriangleup_r^n$, let $g_u(X)$ be the number of nodes assigned to the $u^{th}$ label-vector.
\end{definition}

\begin{remark}\label{AnyLabeling}
It is helpful to notice that $\sum_{u}{g_u(X) \choose 2}$ is the total number of within-group pairs of nodes and $\sum_{u<v}[g_u(X)g_v(X)]$ is the total number of between-group pairs of nodes given the labeling $X$.  From this we see that the following equalities hold for any labeling assignment $X$:
\begin{equation}\begin{split}
&\sum_{u}{g_u(X) \choose 2}-e_{in}(X) = \frac{1}{2}\sum_{\substack{X_i= X_j\\(i,j)\notin e}}\langle X_i, X_j\rangle,\\
&\sum_{u<v}[g_u(X)g_v(X)]-e_{out}(X) = -\frac{r-1}{2}\sum_{\substack{X_i\neq X_j\\(i,j)\notin e}}\langle X_i, X_j\rangle,\\
&{n \choose 2}-\sum_{u<v}[g_u(X)g_v(X)]=\sum_{u}{g_u(X) \choose 2}.
\end{split}
\end{equation}
\end{remark}

From the definition of the SBM we first notice that, unconditioned on a specific adjacency matrix $M$, $\mathbb{P}[\textbf{X}=X]=r^{-n}$ for any $X\in\bigtriangleup^n_{r}$.  However, given a specific SBM-generated adjacency matrix $M$,
\begin{equation*}\begin{split}
\mathbb{P}[\textbf{X}=X|\textbf{M}=M]=&\frac{\mathbb{P}[\textbf{M}=M|\textbf{X}=X]\mathbb{P}[\textbf{X}=X]}{\mathbb{P}[\textbf{M}=M]}=C'\mathbb{P}[\textbf{M}=M|\textbf{X}=X]\\
=&C'\big(p\big)^{e_{in}(X)}\big(1-p\big)^{\sum_{u}{g_u(X) \choose 2}-e_{in}(X)}\\
&\indent\indent\indent\cdot\big(q\big)^{e_{out}(X)}\big(1-q\big)^{\sum_{u<v}[g_u(X)g_v(X)]-e_{out}(X)}\\
=&C'\big(p\big)^{|e|-e_{out}(X)}\cdot \big(1-p\big)^{{n \choose 2}-|e|-\sum_{u<v}[g_u(X)g_v(X)]+e_{out}(X)}\\
&\cdot\big(q\big)^{e_{out}(X)}\big(1-q\big)^{\sum_{u<v}[g_u(X)g_v(X)]-e_{out}(X)}
\end{split}
\end{equation*}
where $C'$ is a constant independent of $X$, the first equality follows from Bayes' Law and the fourth equality follows from remark \ref{AnyLabeling}.

Since, conditioned on $M$, the values for $n$, $e$, $p$ and $q$ are all independent of $X$ we incorporate them into the constant terms $C$ and $C''$ to get,
\begin{equation*}\begin{split}
\mathbb{P}[\textbf{X}=X|&\textbf{M}=M]\\
=C''\bigg[& \Big(\frac{p}{q}\Big)^{-e_{out}(X)}\cdot \Big(\frac{1-p}{1-q}\Big)^{-\big(\sum_{u<v}[g_u(X)g_v(X)]-e_{out}(X)\big)}\bigg]\\
=C''\Bigg[&\exp\bigg(-e_{out}(X)\log\Big(\frac{p}{q}\Big)-\Big(\sum_{u<v}[g_u(X)g_v(X)]-e_{out}(X)\Big)\log\Big(\frac{1-p}{1-q}\Big)\bigg)\Bigg]\\
=C''\Bigg[&\exp\bigg(-r e_{out}(X)\log\Big(\frac{p}{q}\Big)-r\Big(\sum_{u<v}[g_u(X)g_v(X)]-e_{out}(X)\Big)\log\Big(\frac{1-p}{1-q}\Big)\bigg)\Bigg]^{\frac{1}{r}}\\
=C''\Bigg[&\exp\bigg((r-1)(-|e|+e_{in}(X))\log\Big(\frac{p}{q}\Big)-e_{out}(X)\log\Big(\frac{p}{q}\Big)\\
&+(r-1)\Big(-{n \choose 2}+|e|+\sum_{u}{g_u(X) \choose 2}-e_{in}(X)\Big)\log\Big(\frac{1-p}{1-q}\Big)\\
&-\Big(\sum_{u<v}[g_u(X)g_v(X)]-e_{out}(X)\Big)\log\Big(\frac{1-p}{1-q}\Big)\bigg)\Bigg]^{\frac{1}{r}}\\
=C\Bigg[&\exp\bigg(e_{in}(X)\Big((r-1)\log\Big(\frac{p}{q}\Big)\Big)-e_{out}(X)\log\Big(\frac{p}{q}\Big)\\
&+\Big(\sum_{u}{g_u(X) \choose 2}-e_{in}(X)\Big)\big((r-1)\log\Big(\frac{1-p}{1-q}\Big)\big)\\
&-\Big(\sum_{u<v}[g_u(X)g_v(X)]-e_{out}(X)\Big)\log\Big(\frac{1-p}{1-q}\Big)\bigg)\Bigg]^{\frac{1}{r}}.
\end{split}
\end{equation*}
Now, from equations (\ref{in}) and (\ref{out}) and remark \ref{AnyLabeling} we get
\begin{equation*}\begin{split}
\mathbb{P}[\textbf{X}=X|&\textbf{M}=M]\\
=C\Bigg[&\exp\bigg(\Big(\sum_{(i,j)\in e}\langle X_i, X_j\rangle\Big)\Big(\frac{r-1}{2}\log\Big(\frac{p}{q}\Big)\Big)+\Big(\sum_{(i,j)\notin e}\langle X_i, X_j\rangle\Big)\Big(\frac{r-1}{2}\log\Big(\frac{1-p}{1-q}\Big)\Big)\bigg)\Bigg]^{\frac{1}{r}}.\\
=C&\exp\Big(\frac{r-1}{2r}\sum_{(i,j)}M_{i,j}\langle X_i, X_j\rangle\Big)=Ce^{\frac{r-1}{2r}Tr(X^TMX)}
\end{split}
\end{equation*}
as desired. 
\end{proof}

\subsection{Corollary \ref{Corollary:Supervised}}
\begin{proof}
Notice that $\mathbb{P}[\textbf{X}_U=X_U|\textbf{X}_L=X_L]=r^{-(n-k)}$ and $\mathbb{P}[\textbf{M}=M|\textbf{X}_L=X_L]$ are both independent of $X_U$. Thus, by Bayes' theorem,  $\mathbb{P}[\textbf{X}_U=X_U|\textbf{M}=M,\textbf{X}_L=X_L]\propto\mathbb{P}[\textbf{M}=M|\textbf{X}_U=X_U,\textbf{X}_L=X_L]$ and the rest follows from the proof of Theorem 1.
\end{proof}

\subsection{Lemma \ref{lemma:Conditional}}
\begin{proof}
Define $M_{\neg i}$ to be the matrix $M$ with the $i^{th}$ row and column removed.  Then, from Theorem 1, the symmetry of $M$ and the linearity of the trace function we get,
\begin{equation}\begin{split}
\mathbb{P}[\textbf{X}_i=X_i|&\textbf{M}=M,\textbf{X}_L=X_L,\textbf{X}_{U_{\neg i}}=X_{U_{\neg i}}] \\
&= \frac{\exp\Big[{\frac{r-1}{2r}\Big(Tr({X_{U_{\neg i}}\brack X_L}^TM_{\neg i}{X_{U_{\neg i}}\brack X_L})+2Tr(X_i^TM_i{X_{U_{\neg i}}\brack X_L})+Tr(X_i^TM_{(i,i)}X_i)\Big)}\Big]}{\sum_{X_j\in\bigtriangleup_r}\exp\Big[{\frac{r-1}{2r}\Big(Tr({X_{U_{\neg i}}\brack X_L}^TM_{\neg i}{X_{U_{\neg i}}\brack X_L})+2Tr(X_j^TM_i{X_{U_{\neg i}}\brack X_L})+Tr(X_j^TM_{(i,i)}X_j)\Big)}\Big]}\\
& = \frac{\exp{\frac{r-1}{2r}\Big(Tr({X_{U_{\neg i}}\brack X_L}^TM_{\neg i}{X_{U_{\neg i}}\brack X_L})\Big)\exp{\Big(2Tr(X_i^TM_i{X_{U_{\neg i}}\brack X_L})\Big)}\exp{\Big(Tr(X_i^TM_{(i,i)}X_i)\Big)}}}{\exp{\frac{r-1}{2r}\Big(Tr({X_{U_{\neg i}}\brack X_L}^TM_{\neg i}{X_{U_{\neg i}}\brack X_L}\Big)\sum_{X_j\in\bigtriangleup_r}\exp{\Big(2Tr(X_j^TM_i{X_{U_{\neg i}}\brack X_L}\Big)\exp{\Big(Tr(X_j^TM_{(i,i)}X_j}\Big)}}}\\
&=\frac{e^{\frac{r-1}{r}(M_i{X_{U_{\neg i}}\brack X_L}X_i^T)}e^{\frac{r-1}{2r}(M_{(i,i)}X_iX_i^T)}}{\sum_{X_j\in\bigtriangleup_r}e^{\frac{r-1}{r}(M_i{X_{U_{\neg i}}\brack X_L}X_j^T)}e^{\frac{r-1}{2r}(M_{(i,i)}X_jX_j^T)}}\\
&=\frac{e^{\frac{r-1}{r}(M_i{X_{U_{\neg i}}\brack X_L}X_i^T)}}{\sum_{X_j\in\bigtriangleup_r}e^{\frac{r-1}{r}(M_i{X_{U_{\neg i}}\brack X_L}X_j^T)}}.
\end{split}
\end{equation}
where the last equality comes from the fact that $X_jX_j^T=1$ for any $X_j\in\bigtriangleup_r$.  
\end{proof}


\subsection{Theorem \ref{thm:sampComplxHSNR}}
\begin{proof}
We start with providing three necessary auxiliary results to be used in our proof (their proofs are provided in the following subsections \ref{app:auxlem2}, \ref{app:auxcor}, \ref{app:auxlem3}):
First, we prove in Lemma \ref{thm:MEMC} that when the absolute differential degree is similar across all node types (as in $SNR<1$), MEMC has a preference for selecting type-1 nodes over type-2 or correctly assigned nodes.

Next, we prove the conditions under which MEMC will have preference to correct type-2 nodes over querying correctly assigned nodes. In particular, Corollary \ref{corollary:SNRb1_typ2_vs_correct} below guarantees that as long as there are type-2 error nodes with absolute differential degree that is lower than other nodes they will be queried by MEMC:

Lastly, Lemma \ref{lemma:skellam} provides the probability of having minority nodes. While majority nodes can be classified correctly by MAP and ML classifiers once most of their neighbors are correctly identified, minority error nodes (i.e. type-2 error nodes) need to be queried directly in order to be corrected. Therefore estimating their proportion in the overall set is crucial for bounding sample complexity analysis for querying type-2 nodes:

To this end we have the necessary ingredients to provide the sample complexity:  using Lemma \ref{thm:MEMC} we obtain that all type-1 errors are initially corrected by MEMC using $m_1$ queries. At the following stage nodes are queried according to their minimal absolute differential degree following Corollary \ref{corollary:SNRb1_typ2_vs_correct}. During this stage type-2 minority nodes as well as correctly assigned nodes are queried at a frequency depending on their distribution around zero differential degree as function of $a$ and $b$. Based on Lemma \ref{lemma:skellam} we can bound the search space around 0-differential degree for type-2 error nodes with the upper bound in Lemma (\ref{lemma:skellam})
\begin{equation}
\label{eq:upbnd}
P(c_{out}\geq c_{in}) \leq \exp\Big(-\Big(\sqrt{\frac{b}{2}} - \sqrt{\frac{a}{2}}\Big)^2\Big).
\end{equation}
and between $[0,-l_c]$ with sampling the mass equal to the summation of the Skellam probability $P(k;a,b)$. The Skellam distribution models the summation of the two racing Poisson processes with means $a$ and $b$, which forms positive differential degree smaller than $-l_c$:
\begin{equation}
\sum_{k=1}^{-l_c}P(k;a,b),\; where\; P(k;a,b) = e^{-(a+b)}\Big(\frac{a}{b}\Big)^{\frac{k}{2}}I_k(2\sqrt{ab}),
\end{equation}
and $I_k(2\sqrt{ab})$ is the Bessel function of the first order. The degree value $l_c$ can be computed by using the upper bound in Eq. \ref{eq:upbnd}. Specifically, by choosing $a$ and $b$ s.t. the bound is smaller than $o(n^{-1})$ and assigning $l_c = b-a$.
We therefore obtain that the expected sample complexity of MEMC is comprised of sampling first the $m_1$ type-1 errors and then sampling a.a.s all the nodes of absolute differential degree smaller than $-l_c$.

Since the Random criterion selects nodes uniformly at random it will have to sample order $n$ nodes to discover all $m_1$ and $m_2$ nodes.
\end{proof}

\subsection{Lemma \ref{thm:MEMC}}
\label{app:auxlem2}
\begin{proof}
Consider the EMC criterion for some node $v_q$:
\begin{equation}
\begin{split}
\text{EMC}(M,X'_U,X_L,\bigtriangleup_r)_{X_q} =
\sum_{X_q\in \bigtriangleup_r} \hat{\mathbb{P}}[\textbf{X}_q=X_{q}|\textbf{M}=M,\textbf{X}_L=X_L,\textbf{X}_{U_{\neg q}}=X'_{U_{\neg q}}]\cdot
\delta(\Phi,X_q).
\end{split}
\end{equation}
We first focus on the the model change component $\delta(\Phi,X_q) = \|\Phi(M,[X_L,X_q],\tilde{X}) - \Phi(M,X_L,\tilde{X})\|$. We examine the model change for a candidate $q$-node $v_q$ that is a $v^1$-node (type-1 error node) where w.l.g its current label is $\tilde{X}_q=-1$, and its newly assigned label is +1. Assume that $v^1_q$ neighbors  are correctly assigned such that $k+\delta$ are +1 node, and $k$ are -1 nodes \footnote{This assumption can be used by using similar argument to \cite{Mossel_1}: Let $V_{\varepsilon} = {v:d_X(v)<\varepsilon\sqrt{np\log{n}}}$. According to Proposition 4.7 therein no two nodes in $V_{\varepsilon}$ are adjacent}.

The model change $\delta(\Phi,X_q)_{I(v^1_q),I(+1)}$, where $I(\cdot)$ maps the input to its corresponding index in the probability model matrix, will have the following value for changing $v_q$ from its current $-1$ label to $+1$ label (to facilitate notation the denominators in $\Phi$ and the constants in the exponents are omitted):
\begin{equation}
\begin{split}
&\delta(\Phi,\{+1\})_{I(v^1_q),I(+1)}=\big\|\exp\big[\underbrace{\sum_{k+\delta}\log{\frac{p}{q}}}_\text{+1 neighbors} + \underbrace{\sum_{k}\log{\frac{p}{q}}(-1)}_\text{-1 neighbors} + \underbrace{\sum_{n-(k+\delta)}\log{\frac{1-p}{1-q}}}_\text{+1 non-neighbors} + \underbrace{\sum_{n-k}\log{\frac{1-p}{1-q}}(-1)}_\text{-1 non-neighbors}  \big]\\
& \;\;\;\;\;\;\;\;\;\;\;\;\;- \exp\big[\sum_{k+\delta}\log{\frac{p}{q}}(-1) + \sum_{k}\log{\frac{p}{q}} +\sum_{n-(k+\delta)}\log{\frac{1-p}{1-q}}(-1) +\sum_{n-k}\log{\frac{1-p}{1-q}}(-1) \big]\|\\
& \;\;\;\;\;\;\;\;\;\;\;\;\;\;\;\;\;\;\;\;= \big\|\exp\big[\delta \log{\frac{p}{q}} - \delta \log{\frac{1-p}{1-q}}\big] - \exp\big[-\delta \log{\frac{p}{q}} + \delta \log{\frac{1-p}{1-q}}]    \big\|\\
& \Rightarrow \text{EMC}(M,X'_U,X_L,\bigtriangleup_r)_{I(v^1_q),I(+1)} = \exp\big[\delta \log{\frac{p}{q}} - \delta \log{\frac{1-p}{1-q}}\big]\cdot \big\|\exp\big[\delta \log{\frac{p}{q}} - \delta \log{\frac{1-p}{1-q}}\big]\\
 &- \exp\big[-\delta \log{\frac{p}{q}} + \delta \log{\frac{1-p}{1-q}}]    \big\|
\end{split}
\end{equation}
Examining the model change for each of the neighbors $u$ of $v^1_q$, and assuming they do not change their label as a result of the new assignment of $v^1_q$ (and therefore their neighbors do not change their labels either) provides $\text{EMC}(M,X'_U,X_L,\bigtriangleup_r)_{X_u}=0$. Therefore the total model change for a type-1 error node $v^1$ is

\begin{equation}
\begin{split}
\text{EMC}(M,X'_U,[X_L,X^1_q],\bigtriangleup_r) = \exp\big[\delta \log{\frac{p}{q}} - \delta \log{\frac{1-p}{1-q}}\big]\cdot \big\|\exp\big[\delta \log{\frac{p}{q}} - \delta \log{\frac{1-p}{1-q}}\big]\\
- \exp\big[-\delta \log{\frac{p}{q}} + \delta \log{\frac{1-p}{1-q}}]    \big\|
\end{split}
\end{equation}

Next, we examine the model change for a candidate $q$-node $v_q$ that is a $v^2$-node (type-2 error node) where w.l.g its current label is $\tilde{X}_q=-1$, and its newly assigned label is +1. Using similar assumptions on its neighbors we arrive at
\begin{equation}
\begin{split}
\text{EMC}(M,X'_U,[X_L,X^2_q],\bigtriangleup_r)=\exp\big[-\delta \log{\frac{p}{q}} + \delta \log{\frac{1-p}{1-q}}]\cdot \big\|\exp\big[-\delta \log{\frac{p}{q}} + \delta \log{\frac{1-p}{1-q}}\big]\\
 - \exp\big[\delta \log{\frac{p}{q}} -\delta \log{\frac{1-p}{1-q}}]    \big\|
 \end{split}
\end{equation}
To this end we can conclude that, under the above assumptions,
\begin{equation}
\text{EMC}(M,X'_U,[X_L,X^1_q],\bigtriangleup_r) >\text{EMC}(M,X'_U,[X_L,X^2_q],\bigtriangleup_r).
\end{equation}
Next, we attend the model change introduced by flipping the assignment of a node correctly labeled (w.l.g. to +1) to its opposite, resulting in creating a minority node whose model change is similar to that of a $v^2$ node:
\begin{equation}
\begin{split}
\text{EMC}(M,X'_U,[X_L,X^3_q],\bigtriangleup_r)=\exp\big[-\delta \log{\frac{p}{q}} + \delta \log{\frac{1-p}{1-q}}]\cdot \big\|\exp\big[-\delta \log{\frac{p}{q}} + \delta \log{\frac{1-p}{1-q}}\big]\\
 - \exp\big[\delta \log{\frac{p}{q}} -\delta \log{\frac{1-p}{1-q}}]    \big\|
 \end{split}
\end{equation}
We conclude that
\begin{equation}
\text{EMC}(M,X'_U,[X_L,X^3_q],\bigtriangleup_r) =\text{EMC}(M,X'_U,[X_L,X^2_q],\bigtriangleup_r).
\end{equation}
\end{proof}



\subsection{Corollary \ref{corollary:SNRb1_typ2_vs_correct}}
\label{app:auxcor}
\begin{proof}
We use here the result of Lemma \ref{thm:MEMC} where for a given fixed $\delta$ for both nodes
\begin{equation}
\begin{split}
\text{EMC}(M,X'_U,[X_L,X^2_q],\bigtriangleup_r) =\text{EMC}(M,X'_U,[X_L,X^3_q],\bigtriangleup_r) =\\ \exp\big[-\delta \log{\frac{p}{q}} + \delta \log{\frac{1-p}{1-q}}]\cdot \big\|\exp\big[-\delta \log{\frac{p}{q}} + \delta \log{\frac{1-p}{1-q}}\big]
 - \exp\big[\delta \log{\frac{p}{q}} -\delta \log{\frac{1-p}{1-q}}]    \big\|
 \end{split}
\end{equation}
However, for different absolute differential degree such that $\delta_2<\delta_3$ we obtain $\text{EMC}(M,X'_U,[X_L,X^2_q],\bigtriangleup_r)>\text{EMC}(M,X'_U,[X_L,X^3_q],\bigtriangleup_r)$
\end{proof}
\subsection{Lemma \ref{lemma:skellam}}
\label{app:auxlem3}
\begin{proof}
We consider the generation of the edges as a Poisson process, where $c_{out}~Poisson(\frac{b}{2})$ and $c_{in}\sim Poisson(\frac{a}{2})$. Then the difference variable $Z=c_{out}-c_{in}$ Has a Skellam distribution:
$Z \sim Skellam(k;b,a)$ such that
\begin{equation}
P(X=k) = e^{(-(a+b))}(\frac{a}{b})^{\frac{k}{2}}I_k (2\sqrt{ab}),
\end{equation}
where $I_k(z)$ is the Bessel function of first order.
Given that $b<a$ we can use the standard Chernoff bound to prove the upper inequality.
Further noting that $X+Y\sim Poiss(b+a)$ and $X|X+Y\sim Bin(X+Y,\frac{b}{b+a})$, and $P(X>Y)=P(X>\frac{X+Y}{2})$ and upper bounding it by conditioning on $X+Y=i$ we can show that
\begin{equation}
P(X>Y) >\frac{\exp(-(\sqrt{\frac{b}{2}} - \sqrt{\frac{a}{2}})^2)}{(\frac{a+b}{2})^2} - \frac{\exp(-(\frac{b}{2} +\frac{a}{2}))}{\sqrt{2}ab} - \frac{\exp(-(\frac{b}{2} +\frac{a}{2}))}{2ab}
\end{equation}
see more details at \cite{skellam}
\end{proof}


\subsection{Lemma \ref{lemma:cascades}}

\begin{proof}
We first consider the case of a majority node $v_j$ with neighbor $v_i$ which has changed its label from $\tilde{x}_i$ to $\tilde{x}_i^{up}$. We observe the following probabilities
\begin{itemize}
  \item $P\{\tilde{x}_j\neq \tilde{x}_i^{up}\}=\frac{1}{2}$ (having different label than the newly revealed neighbor's label),
  \item $P\{\tilde{x}_j\neq {x}_j\}=\frac{1}{2}$(having an erroneous assignment) at $SNR<1$, and
  \item $P\{x_j=x_i\}=\frac{a-b}{(a+b)}$ (having similar ground truth label as its neighbors flipped label).
\end{itemize}  Therefore, the probability of $v_j$ flipping its current (erroneous) label to its correct label is $\frac{a-b}{4(a+b)}$. In the same pattern we summarize the different label-flip probabilities for majority and minority nodes, given a label-flip at a neighboring node:

\begin{center}
\begin{tabular}{ |c|c|c| }
 \hline
 node type & correct flip & incorrect flip \\\hline\hline
 majority & $\frac{a}{4(a+b)}$ & $\frac{b}{4(a+b)}$ \\\hline
 minority & $\frac{b}{4(a+b)}$ & $\frac{a}{4(a+b)}$ \\
 \hline
\end{tabular}
\end{center}
To this end we can compute the expected number of nodes to correctly change their label following a query
\begin{equation}\label{eq:corr_filps}
  d\cdot\Big( \bar{p}_{maj}\frac{a}{4(a+b)} - \bar{p}_{maj}\frac{b}{4(a+b)} \Big) = d \cdot\bar{p}_{maj}\Big(\frac{a-b}{4(a+b)}\Big) = d \cdot p_{maj},
\end{equation}
where $\bar{p}_{maj} = 1-2\bar{p}_{min}$, $\bar{p}_{min}$ is defined as the upper bound in Lemma \ref{lemma:skellam}:
\begin{equation}
P(c_{out}\geq c_{in}) \leq \exp\Big(-\Big(\sqrt{\frac{b}{2}} - \sqrt{\frac{a}{2}}\Big)^2\Big),
\end{equation}
and $p_{maj} = \bar{p}_{maj}\Big(\frac{a-b}{4(a+b)}\Big).$\\
Next, given the average degree $d$ we consider the cascade of diameter that is $O(log_d(n))$ as the following power series:
\begin{equation}\label{eq:cacade_gseries}
  N_{maj} = dp_{maj} + dp_{maj}dp_{maj}+...+(dp_{maj})^{log_dn} =\frac{dp_{maj}}{1-dp_{maj}}(1-(dp_{maj})^{\log_d(n)}).
\end{equation}
Similar derivation is applied to minority nodes to obtain $N_{min}$.

\end{proof}

\subsection{Theorem \ref{thm:sampComplxLSNR}}

\begin{proof}
The active learning process of MEMC is comprised of 3 stages:\\
1. \textbf{Super-linear cascades phase.} The super-linear phase in which cascades take place poses the highest EMC. This stage concludes once there exists no path of size larger than 2 in which nodes of zero differential degree exist, with respect to the assignment $\tilde{X}$. The expected number of queries to attain this state is obtained by dividing $\frac{n}{2}$ by the number of nodes that have flipped their assignments per query, and as such attained non-zero differential degree. The number of such nodes is derived from Lemma \ref{lemma:cascades} as $N_{maj}+N_{min}$. Therefore, the expected number of queried nodes in this stage is the first component in  Eq. (\ref{eq:sampComplxLSNR_MEMC}):
\begin{equation}\label{}
  \frac{n}{2(N_{maj}+N_{min})}.
\end{equation}
2. \textbf{Local type-1 node queries.} The local model change of type-1 error nodes suggest the next highest EMC. As suggested in Lemma \ref{thm:MEMC} and observed for the $SNR>1$ case. The local type-1 node queries starts once there exists no path of size larger than 2 in which nodes of zero differential degree exist (which typically gives rise to cascades). We therefore subtract from the existing $m_1$ errors the type-1 error nodes that have been corrected via the cascades process, and since MEMC will query only type-1 we consider this difference as the set of queries for this stage
\begin{equation}
\Big(m_1 - \frac{nN_{maj}}{2(N_{maj}+N_{min})} \Big)
\end{equation}
The local type-1 error correction is terminated once all type-1 nodes are corrected. \\
3. \textbf{Type-2 bounded search}. The final active querying stage includes querying both type-2 nodes and already correct nodes with minimal absolute differential degree within the $[l_c,-l_c]$ differential degree segment. The process is also equivalent to the process for the $SNR>1$ case following Corollary \ref{lemma:SNRb1_typ2_vs_correct} which establishes preference for type-2 nodes with low absolute differential degree. As in Theorem \ref{thm:sampComplxHSNR}, we use the Skellam probability here to represent the mass of nodes with positive differential degree smaller than $-l_c$ and the upper bound in Lemma (\ref{lemma:skellam}) to cover nodes with negative differential degree down to $l_c$. This mass is taken from the remaining nodes after subtracting prior $m_1$ queries and type-1 nodes have been corrected during the super-linear cascades phase:
\begin{equation}
\Big(n-\frac{nN_{min}}{2(N_{maj}+N_{min})} - m_1\Big)\\ \cdot\Big(\sum_{k=1}^{-l_c}P(k;a,b)+\exp\Big(-\Big(\sqrt{\frac{b}{2}}-\sqrt{\frac{a}{2}}\Big)^2\Big)\Big),
\end{equation}

The Random selection algorithm triggers cascades of correction, similarly to MEMC. However, once all paths of zero differential degree with length $l\geq 2$ have been exhausted, the following process entails uniform unbounded sampling on the remaining mass of nodes, scaling with as $n$ queries.
\end{proof}
\section{Best-Fit Simplex}
\label{BestFitAlg}
We present the following algorithm for finding the best-fit simplex for a given set of unit-vectors.

     \begin{figure}[!htpb]\centering
     \small
       \begin{tabular}{rl} \hline
       & \large \textbf{\textit{bestFitSimplex}}$(X,r)$\\
        &  \textbf{Input}: $X$: set of unit vectors, $r$: , $r$: number of vectors in simplex\\
        & \textbf{Output}: $\bigtriangleup_r$:  best-fit simplex\\
        & 1. $V =$ K-Means($X,r$)\\
        & 2. $\bigtriangleup_r =$ bestFitSDP($V$)\\
        \hline
       \end{tabular}
    \caption{Pseudo-code for \textit{bestFitSimplex}.}
    \label{Algorithm:bestFitSimplex}
    \end{figure}

We provide pseudo-code in Figure \ref{Algorithm:bestFitSimplex}.  In this algorithm K-Means is the well-known algorithm and outputs a set of $r$ vectors.  For the algorithm bestFitSDP we define the $(2r\times 2r)$-matrix $A$ where,
\[
\langle A_{i,j}\rangle=
\begin{cases}
1 &\text{if }i= j+r\\
1 &\text{if }i= j-r\\
0 & other wise.
\end{cases}
\]
Then, bestFitSDP finds the best-fit simplex $\bigtriangleup_r$ by factoring the solution $\mathbb{X}={\bigtriangleup_r \brack V}{\bigtriangleup_r \brack V}^T$  of the following SDP
\begin{equation}\label{Algo:bestFitSDP}
\begin{split}
\text{bestFitSDP}(V,r){: } &\max_{\mathbb{X}} \text{ Tr}(A\mathbb{X})\\
&\text{s.t. } \mathbb{X}_{ii} = 1, \text{ for } 1\leq i\leq 2r\\
&\;\;\;\;\; \mathbb{X}_{ij} = -\frac{1}{r-1} \text{ for } 1\leq i,j\leq r\\
&\;\;\;\;\; \mathbb{X}_{ij} = \langle V_i,V_j\rangle \text{ for } r+1\leq i,j\leq2r\\
&\;\;\;\;\; \mathbb{X} \succeq 0.\\
\end{split}
\end{equation}
We define the output of bestFitSDP$(V,r)$ to be $\bigtriangleup_r$ rotated so that the vectors $V$ in our output ${\bigtriangleup_r \brack V}$ line up with the original input vectors $V$.  This completes the algorithm.

\section{Increased SNR error behaviour}

\begin{figure}[!htpb]
\begin{center}
\includegraphics[height=2.9in, width =3.3in]{./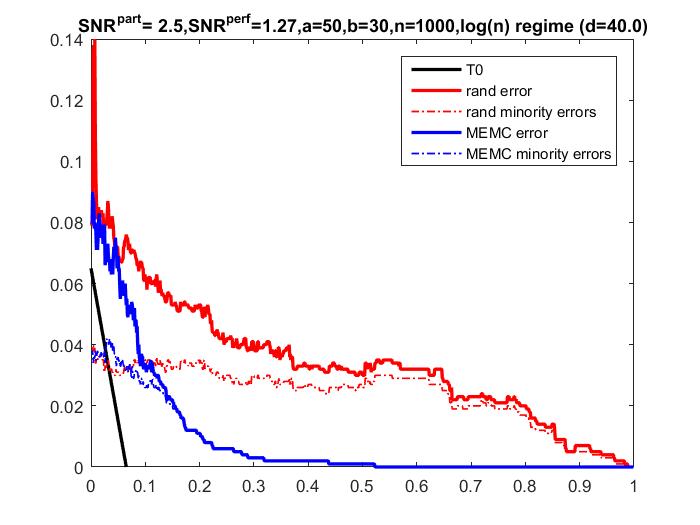}
\end{center}
\caption{High SNR comparison of MEMC error with Random error and the optimal active learner error}
\label{Figure:SNR_HIGH}
\end{figure}


\begin{thebibliography}{}
\setlength{\itemindent}{-\leftmargin}
\makeatletter\renewcommand{\@biblabel}[1]{}\makeatother


\bibitem[Allahverdyan et~al.(2010)]{Allahverdyan}
A.~E. Allahverdyan, G.~Ver Steeg, and A.~Galstyan.
\newblock Community detection with and without prior information.
\newblock {\em EPL (Europhysics Letters)}, 90(1):18002, 2010.

\bibitem[Bandeira et~al.(2016)]{Bandeira}
A.~S. Bandeira, N. Boumal, and V. Voroninski.
\newblock On the low-rank approach for semidefinite programs arising in
  synchronization and community detection.
\newblock In Vitaly Feldman, Alexander Rakhlin, and Ohad Shamir, editors, {\em
  29th Annual Conference on Learning Theory}, volume~49 of {\em Proceedings of
  Machine Learning Research}, pages 361--382, Columbia University, New York,
  New York, USA, 23--26 Jun 2016. PMLR.

\bibitem[Cheng et~al.(2014)]{Cheng}
J.~Cheng, M.~Leng, L.~Li, H.~Zhou, and X.~Chen.
\newblock Active semi-supervised community detection based on must-link and
  cannot-link constraints.
\newblock {\em PLoS One}, 9(10):18002, 2014.

\bibitem[Decelle et~al.(2011)]{Decelle}
A.~Decelle, F.~Krzakala, C.~Moore, and L.~Zdeborov{\'a}.
\newblock {Asymptotic analysis of the stochastic block model for modular
  networks and its algorithmic applications}.
\newblock {\em {Physical Review E : Statistical, Nonlinear, and Soft Matter
  Physics}}, 84:066106, 2011.

\bibitem[Eaton and Mansbach(2012)]{Eaton}
E. Eaton and R. Mansbach.
\newblock A spin-glass model for semi-supervised community detection.
\newblock In {\em Proceedings of the Twenty-Sixth AAAI Conference on Artificial
  Intelligence}, AAAI'12, pages 900--906. AAAI Press, 2012.

\bibitem[Fortunato(2010)]{Fortunato}
S. Fortunato.
\newblock Community detection in graphs.
\newblock {\em Physics Reports}, 486(3-5):75 -- 174, 2010.

\bibitem[Freytag et~al.(2014)]{EMC1}
A. Freytag, E. Rodner, and J. Denzler.
\newblock Selecting influential examples: Active learning with expected model
  output changes.
\newblock In David Fleet, Tomas Pajdla, Bernt Schiele, and Tinne Tuytelaars,
  editors, {\em Computer Vision -- ECCV 2014}, pages 562--577, Cham, 2014.
  Springer International Publishing.

\bibitem[Gadde et~al.(2016)]{Gadde}
A. Gadde, E.~En Gad, S. Avestimehr, and A. Ortega.
\newblock Active learning for community detection in stochastic block models.
\newblock {\em CoRR}, 1605 02372, 2016.

\bibitem[Goemans and Williamson(1995)]{Goemans}
M.~X. Goemans and D.P. Williamson.
\newblock Improved approximation algorithms for maximum cut and satisfiability
  problems using semidefinite programming.
\newblock {\em Journal of the ACM}, 42:1115--1145, 1995.

\bibitem[Holland et~al.(1983)]{SBM}
P. Holland, K. Laskey, and S. Leinhardt.
\newblock Stochastic blockmodels: First steps.
\newblock {\em Social Networks - SOC NETWORKS}, 5:109--137, 06 1983.

\bibitem[Mossel and Xu(2016)]{Xu}
E. Mossel and J. Xu.
\newblock Local algorithms for block models with side information
\newblock {\em 7th Innovations in Theoretical Computer Science (ITCS)}, Jan. 2016.

\bibitem[Saad and Nosratinia(2018)]{saad}
H. Saad,  and A. Nosratinia.
\newblock Community Detection with Side Information: Exact Recovery under the Stochastic Block Model.
\newblock {\em IEEE Journal of Selected Topics in Signal Processing}, May. 2018


\bibitem[Kanade et~al.(2016)]{mossel_partial}
V. Kanade, E. Mossel, and T. Schramm.
\newblock Global and local information in clustering labeled block models.
\newblock {\em {IEEE} Trans. Information Theory}, 62(10):5906--5917, 2016.


\bibitem[Erdős and Rényi(1960)]{erdosh}
Erdős,Rényi.
\newblock  On the evolution of random graphs.
\newblock {\em Publications of the Mathematical Institute of the Hungarian Academy of Sciences}, 5:17-61, 1960.

\bibitem[Kushnir(2014)]{Kushnir}
D. Kushnir.
\newblock Active-transductive learning with label-adapted kernels.
\newblock In {\em Proceedings of the 20th ACM SIGKDD International Conference
  on Knowledge Discovery and Data Mining}, KDD '14, pages 462--471, New York,
  NY, USA, 2014. ACM.

\bibitem[Leng et~al.(2013)]{Leng}
M. Leng, Y. Yao, J. Cheng, W. Lv, and X. Chen.
\newblock Active semi-supervised community detection algorithm with label
  propagation.
\newblock In Weiyi Meng, Ling Feng, St{\'e}phane Bressan, Werner Winiwarter,
  and Wei Song, editors, {\em Database Systems for Advanced Applications},
  pages 324--338, Berlin, Heidelberg, 2013. Springer Berlin Heidelberg.

\bibitem[Liu(2012)]{Millar}
S. Liu.
\newblock Maximum likelihood estimation and inference: With examples in r, sas,
  and admb by russell b. millar.
\newblock {\em International Statistical Review}, 80(2):346--346, 2012.

\bibitem[Massoulie(2014)]{Massoulie}
L. Massouli{\'e}.
\newblock Community detection thresholds and the weak ramanujan property.
\newblock In {\em Proceedings of the Forty-sixth Annual ACM Symposium on Theory
  of Computing}, STOC '14, pages 694--703, New York, NY, USA, 2014. ACM.

\bibitem[Moore et~al.(2011)]{Moore}
C. Moore, X. Yan, Y. Zhu, J.-Baptiste Rouquier, and Terran
  Lane.
\newblock Active learning for node classification in assortative and
  disassortative networks.
\newblock In {\em Proceedings of the 17th ACM SIGKDD International Conference
  on Knowledge Discovery and Data Mining}, KDD '11, pages 841--849, New York,
  NY, USA, 2011. ACM.

\bibitem[Mossel et~al.(2015)]{Mossel_1}
E. Mossel, J. Neeman, and A. Sly.
\newblock Consistency thresholds for the planted bisection model.
\newblock In {\em Proceedings of the Forty-seventh Annual ACM Symposium on
  Theory of Computing}, STOC '15, pages 69--75, New York, NY, USA, 2015. ACM.

\bibitem[Mossel(2015)]{Mossel_3}
E. Mossel, J. Neeman, and A. Sly.
\newblock Reconstruction and estimation in the planted partition model.
\newblock {\em Probability Theory and Related Fields}, 162(3):431--461, Aug
  2015.

\bibitem[Nadakuditi and Newman(2012)]{newman}
R. R. Nadakuditi, and M. EJ Newman.
\newblock Graph spectral and detectability of community structure in networks.
\newblock Physical Review Letters 108(18): 188701, 2012.

\bibitem[Yun and Proutiere(2015)]{Proutiere}
S.-Y. Yun and A. Proutiere.
\newblock Optimal Cluster Recovery in the Labeled Stochastic Block Model.
\newblock ArXiv e-prints, October 2015.

\bibitem[Abbe(2017)]{abbe_rev}
E. Abbe.
\newblock Community detection and stochastic block models: recent developments
\newblock {\em Journal of Machine Learning Research} 18, pages 1--177, 2017

\bibitem[Settles(2010)]{Settles}
B. Settles.
\newblock Active learning literature survey.
\newblock Technical report, 2010.

\bibitem[Tang and Liu(2010)]{comdetectBook}
L. Tang and H. Liu.
\newblock {\em Community Detection and Mining in Social Media}, volume~2.
\newblock 01 2010.

\bibitem[Traud et~al.(2012)]{Traud}
{A. L.} Traud, {P. J.} Mucha, and {M. A.} Porter.
\newblock Social structure of facebook networks.
\newblock {\em Physica A: Statistical Mechanics and its Applications},
  391(16):4165--4180, 2012.

\bibitem[Vezhnevets et~al.(2012)]{EMC2}
A. Vezhnevets, J. Buhmann, and V. Ferrari.
\newblock Active learning for semantic segmentation with expected change.
\newblock 06 2012.

\bibitem[Yang et~al.(2015)]{Yang}
L.~Yang, D.~Jin, X.~Wang, and X.~Cao.
\newblock Active link selection for efficient semi-supervised community
  detection.
\newblock {\em Scientific Reports}, 5, 2015.

\bibitem[Zhang et~al.(2014)]{Zhang}
P. Zhang, C. Moore, and L. Zdeborova.
\newblock Phase transitions in semisupervised clustering of sparse networks.
\newblock {\em Physical review. E, Statistical, nonlinear, and soft matter
  physics}, 90, 04 2014.

\bibitem[Leskovec and Krevl(2014)]{stanford_socdata}
J. Leskovec, and A. Krevl.
\newblock Stanford Large Network Dataset Collection.
\newblock {\em \url{http://snap.stanford.edu/data}}.
\newblock 06 2014.

\bibitem[Kamath et. al. (2015)]{skellam}
G. M. Kamath and E. Sasoglu and D. N. C. Tse.
\newblock Optimal Haplotype Assembly from High-Throughput Mate-Pair Reads.
\newblock In \emph{http://arxiv.org/abs/1502.01975}, 2015.

\end{thebibliography}
\end{document}